\newcolumntype{P}[1]{>{\centering\arraybackslash}p{#1}}
\newtheorem{lemma}{Lemma}
\def\bF{{\bf F}}
\def\blfootnote{\xdef\@thefnmark{}\@footnotetext}
\begin{document}

\title{
    Power-Constrained and Quantized MIMO-RSMA Systems
    with Imperfect CSIT: Joint Precoding, Antenna Selection, and Power Control
}

\author{
    Jiwon Sung, {\it Student Member, IEEE},
    Seokjun Park, {\it Student Member, IEEE},
    and Jinseok Choi, {\it Member, IEEE}

    \thanks{
        J. Sung, S. Park, and J. Choi are with the School of Electrical Engineering, Korea Advanced Institute of Science and Technology, Daejeon, South Korea (e-mail: {\texttt{\{jiwonsung, sj.park, jinseok\}@kaist.ac.kr}}).
        
    }
}

\maketitle \setcounter{page}{1} 

\begin{abstract}
    To utilize the full potential of the available power at a base station (BS), we propose a joint precoding, antenna selection, and transmit power control algorithm for a total power budget at the BS.
    We formulate a sum spectral efficiency (SE) maximization problem for downlink multi-user multiple-input multiple-output (MIMO) rate-splitting multiple access (RSMA) systems with arbitrary-resolution digital-to-analog converters (DACs).
    We reformulate the problem by defining the ergodic sum SE using the conditional average rate approach to handle imperfect channel state information at the transmitter (CSIT), and by using approximation techniques to make the problem more tractable.
    Then, we decompose the problem into precoding direction and power control subproblems.
    We solve the precoding direction subproblem by identifying a superior Lagrangian stationary point, and the power control subproblem using gradient descent.
    We also propose a complexity-reduction approach that is more suitable for massive MIMO systems.
    Simulation results not only validate the proposed algorithm but also reveal that when utilizing the full potential of the power budget at the BS, medium-resolution DACs with $8 \sim 11$ bits may actually be more power-efficient than low-resolution DACs.
\end{abstract}

\begin{IEEEkeywords}
    SE, antenna selection, quantization, RSMA, precoding, imperfect CSIT, and nonlinear eigenvalue problem.
\end{IEEEkeywords}

\blfootnote{This work was presented in part at the {\em IEEE Veh. Technol. Conf. 2025 (to appear)} \cite{sung2025joint_TOAPPEAR}.}

\section{Introduction}

Multiple-input multiple-output (MIMO) \cite{mimo:twc:10:marzetta} architectures can obtain spectral efficiency (SE) gains by simply increasing the number of antenna elements at a base station (BS). 
However, practical deployments of large antenna arrays face a non-trivial circuit power consumption challenge when the BS has limited power resources.
In order to utilize the full potential of the available power at the BS and maximize the sum SE of users, the BS must strategically balance the trade-off between the transmit power and the circuit power consumption instead of employing a lot of antennas with very low transmit power.
To lower the circuit power, one possible solution is to selectively turn off antennas that contribute little to the overall communication performance.
Such antenna selection strategies have been widely studied in the literature \cite{sanayei:commmag:04, tervo2015selection, vlachos2020energy, choi2021energyIOTJ}.
Meanwhile, another possible solution is to reduce the number of quantization bits of quantizers such as digital-to-analog converters (DACs), which are primary power consumers at the transmitter.
As the power consumption of a DAC is exponentially proportional to the number of quantization bits \cite{zhang2018low}, decreasing the DAC resolution can significantly reduce the power consumption. 
However, this comes at the cost of having non-negligible quantization errors.
This greatly deteriorates the accuracy of channel estimation, which can cause severe inter-user interference (IUI).
In this regard, employing a rate-splitting multiple access (RSMA) scheme \cite{mao2018rate_bri}, which can effectively mitigate IUI and overcome the limitations on SE gains in downlink multi-user multiple-input multiple-output (MU-MIMO) systems \cite{kaleva:tsp:16, joudeh:16:tsp, dai:twc:16, li:jsac:20,clerckx2016rate}, can play a key role in MIMO systems with low-resolution quantizers.

As the precoder, the set of active antennas, and the transmit power are all coupled to each other, it is necessary to design an algorithm that jointly optimizes these three aspects in order to utilize the full potential of the available power at a BS and achieve a better SE performance.
In this paper, we propose such a joint optimization algorithm for downlink MIMO-RSMA systems with arbitrary-resolution DACs that maximizes the sum SE of users given a total power budget at the BS.
In addition, we consider imperfect channel state information at the transmitter (CSIT).
Lastly, we also propose a complexity-reduction approach of the proposed algorithm that is more suitable for massive MIMO systems.

\subsection{Prior Works}

As mentioned earlier, one way to reduce the circuit power consumption at a BS is to use low-resolution DACs, however, at the cost of introducing significant nonlinear quantization errors that negatively affect the SE performance.
To handle such nonlinear errors in a tractable way, linear approximation methods such as the Bussgang decomposition \cite{mezghani2012capacity} and the additive quantization noise model (AQNM) \cite{orhan2015low} are commonly used for analysis.
Such linear approximations were also widely adopted in quantized MIMO communications \cite{fan2015uplink, jacobsson2017quantized, zhang2018mixed, ribeiro2018energy, ding2019spectral, park2022rate, choi2021energyIOTJ}.

Some works considered low-resolution quantizers for precoding design.
In \cite{jacobsson2017quantized}, conventional linear precoding methods such as the minimum mean square error (MMSE) or zero-forcing (ZF) were proposed using low-resolution DACs with $3 \sim 4$ bits.
This approach was further refined by using an alternating
one-bit precoding method in \cite{chen2018alternating}.
Additionally in \cite{wang2018finite}, the alternating direction method of multipliers (ADMM) was adopted to solve an IUI minimization problem.
Furthermore in \cite{choi2021energyIOTJ}, an algorithm based on the generalized power iteration (GPI) method was proposed to solve an energy efficiency (EE) maximization problem.

Another strategy to lower the circuit power consumption at a BS is to use antenna selection algorithms.
In \cite{sanayei:commmag:04}, a simple norm-based antenna selection algorithm was proposed where antennas with the lowest channel gain are turned off first.
Subsequently, in \cite{tervo2015selection}, a joint beamforming and antenna selection algorithm for maximizing the EE under perfect CSIT was proposed.
In \cite{vlachos2020energy}, low-resolution quantization was considered in its EE maximizing antenna selection algorithm under perfect CSIT; however, antenna selection was performed while fixing the precoders.
Lastly, in \cite{choi2021energyIOTJ}, a joint beamforming and antenna selection algorithm for an EE maximization problem using low-resolution quantizers was proposed under perfect CSIT.

Various approaches have been developed to enhance the SE performance using RSMA with 
imperfect CSIT \cite{joudeh2016sum, li:jsac:20, park2021rate}, quantization error \cite{park2022rate}, and secure communications \cite{fu2020robust}.
Although these prior works on RSMA frameworks provided state-of-the-art solutions that maximize the SE, they set the BS to use the maximum transmit power and did not consider the circuit power consumption in their optimization.
Using the maximum transmit power does not guarantee the optimal spectral efficiency performance in scenarios where the BS has limited power resources, as it may be better to lower the transmit power and use the extra power to utilize more antennas.
In contrast, a comprehensive optimization that balances the trade-off between the transmit power and the circuit power consumption allows for a strategic balancing of the antenna deployment and the transmit power based on the system requirements.
For instance, the BS may activate more antennas with reduced transmit power, or conversely, use fewer antennas with increased transmit power, depending on the specific system requirements and constraints.

Meanwhile, the works that maximize the EE using antenna selection while considering the circuit power consumption \cite{tervo2015selection, vlachos2020energy, choi2021energyIOTJ} did not explicitly optimize the SE, and thus, cannot guarantee a desirable SE performance, which limits their applicability in practice.
In addition, they also did not consider imperfect CSIT.
In this regard, it is of interest to consider the total power budget for the joint optimization of antenna selection, precoding, and transmit power control in MIMO-RSMA systems under coarse quantization to accomplish efficient and maintainable communication systems.

\subsection{Contributions}
In this paper, we develop a joint precoding, antenna selection, and transmit power control method for maximizing the sum SE in a downlink quantized MIMO-RSMA system under a total power budget constraint at the BS.
Our main contributions are summarized below:
\begin{itemize}
    \item
    We formulate the joint optimization problem, which
    stands apart from most prior approaches \cite{joudeh2016sum, li:jsac:20, park2021rate, park2022rate,tervo2015selection, vlachos2020energy, choi2021energyIOTJ}
    in that it explicitly considers the trade-off between the transmit power and the circuit power consumption within the total power budget.
    
    \item 
    We consider imperfect CSIT, where the covariance matrix of the channel estimation error is assumed to be known, and reformulate the problem to leverage the error covariance matrix in the optimization.
    To this end, we define the ergodic sum SE using the conditional average achievable rate approach \cite{joudeh2016sum}
    and establish a lower bound for the sum SE given both the estimated channel and the channel error covariance matrix.
    We then incorporate the lower bound into the objective function.

    \item 
    We propose a joint precoding, antenna selection, and power control optimization algorithm given a total power budget.
    We first use the Lagrangian method to incorporate the power constraint into the objective function.
    Then, the lower bound-based problem is divided into two subproblems: one for optimizing the precoding direction with joint antenna selection and the other for optimizing the transmit power.
    We solve the former using a GPI-based approach and the latter using gradient descent.
    Lastly, we adopt the bisection method to efficiently find the Lagrange multiplier.

    \item 
    We further propose a complexity-reduction approach of the proposed algorithm that is more suitable for massive MIMO systems.
    First, we assume independent and identically distributed (IID) channels to approximate the covariance matrix of the channel estimation error as a diagonal matrix.
    Then, we recursively compute matrix inversions using the Sherman-Morrison formula to significantly reduce the complexity.
    
    \item 
    We numerically show that the proposed algorithm outperforms other baselines in terms of the sum SE performance and the complexity.
    One of the key takeaways of our work is that medium-resolution DACs can actually be more power-efficient than low-resolution DACs when utilizing the full potential of the power resources at the BS.
\end{itemize}

\textit{Notations}:
$\text{a}$, ${\bf{a}}$, and ${\bf{A}}$ are a scalar, a vector, and a matrix.
The superscripts $(\cdot)^{\sf T}$, $(\cdot)^{\sf H}$, and $(\cdot)^{-1}$ denote the matrix transpose, Hermitian, and inversion, while $\mathbb{E}[\cdot]$, ${\rm Tr}(\cdot)$, and $\otimes$ represent the expectation, the trace, and the Kronecker product operators, respectively.
Lastly, ${\bf 0}_{N}$ and ${\bf{I}}_N$ denote the $N \times 1$ zero vector and $N \times N$ identity matrix, respectively.

\section{System Model}
\label{sec:sys_model}

\subsection{Signal Model}

We consider a downlink MU-MIMO system where a BS equipped with $N$ antennas serves $K$ single-antenna users.
We assume arbitrary-bit DACs at the BS.
Our system employs the single-layer rate-splitting (RS) architecture \cite{clerckx2016rate,joudeh2016sum} as our transmission scheme, as illustrated in Fig.~\ref{fig:RS-SE-DAC-ADC system model}.
Each individual message $W_k$ of user $k$ is initially split into a common part $W_k^{\sf c}$ and a private part $W_k^{\sf p}$.
Following this, the common parts of all users are combined and encoded to create a single common message, i.e., $W^{\sf c} = \left( W_1^{\sf c}, \cdots, W_K^{\sf c} \right)$.
The resulting common message $W^{\sf c}$ is then encoded into a common stream $s_{\sf c}$ using a public codebook so that $s_{\sf c}$ can be decodable later on by all $K$ users in the network.
Meanwhile, a private codebook is used to encode the private messages $W_1^{\sf p}, \cdots, W_K^{\sf p}$ into private streams $s_1, \cdots, s_K$ so that they are decodable only by their respective users.
Upon receiving a signal, a user first decodes the common stream $s_{\sf c}$, removes it via SIC, then decodes the corresponding private stream $s_k$.
Here, we assume no error propagation when performing SIC, and also assume that the common and private streams follow a Gaussian distribution with zero mean and unit variance, i.e., $s_{\sf c}, s_k \sim {\mathcal {CN} (0,1)}$ $\forall k$.

\begin{figure}[!t]\centering
	\subfigure{\resizebox{1\columnwidth}{!}{\includegraphics{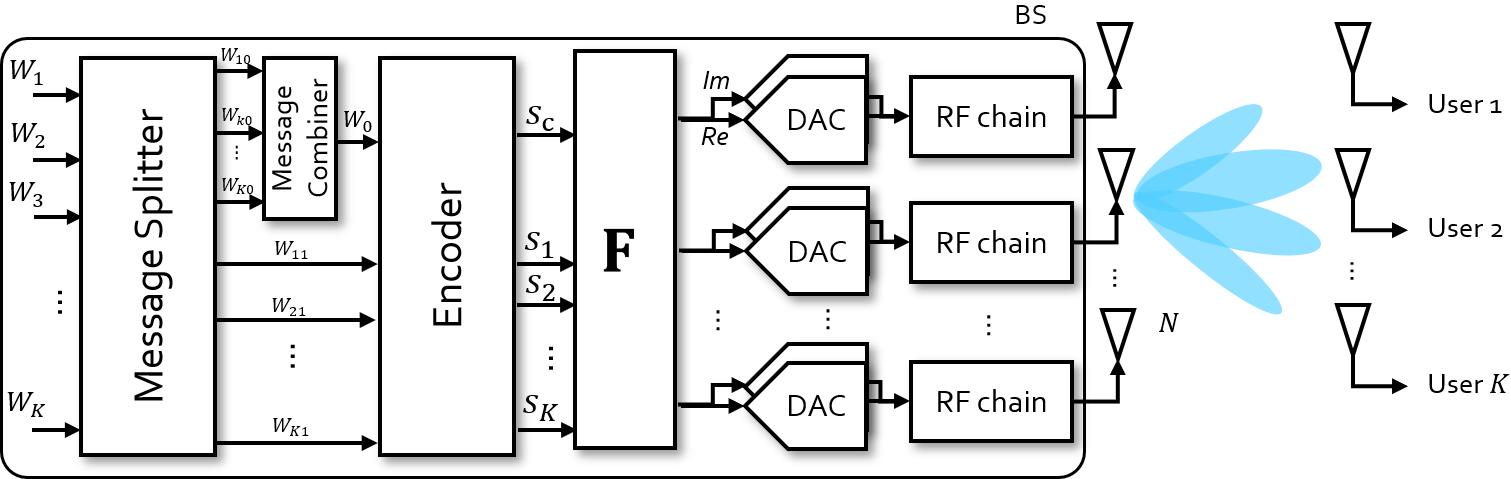}}}
	\caption{A visualized downlink MU-MIMO RSMA system.} 
 	\label{fig:RS-SE-DAC-ADC system model}
 	\vspace{-1em}
\end{figure}

Let $P$ denote the maximum transmit power at the BS.
The precoded digital baseband transmit signal ${\bf{x}} \in \mathbb{C}^{N}$ is 
\begin{align}
    \label{eq:digital base band signal}
    {\bf{x}} = \sqrt{P}{\bf{f}}_{0} s_{\sf c} + \sqrt{P}\sum_{k = 1}^{K} {\bf{f}}_k s_k,
\end{align}
where ${\bf f}_{0} \in \mathbb{C}^{N}$ and ${\bf f}_k \in \mathbb{C}^{N}$ are the precoding vectors for the common and private streams, respectively.
Then, the digital baseband signal ${\bf{x}}$ in \eqref{eq:digital base band signal} undergoes quantization at the DACs before being transmitted.
Using the AQNM technique \cite{fletcher2007robust}, which linearly approximates the quantization process under the assumption of the MMSE quantizer, the analog baseband transmit signal after undergoing quantization is expressed as
\begin{align} 
    \label{eq:linear_quantization_model}
     {Q}({\bf{x}}) \approx   {\bf x}_{\sf q} 
     & = {\bf{\Phi}}_{{\alpha }_{\sf DAC}} {\bf x} + {\bf q}_{\sf DAC} \\
     & =  \sqrt{P}{\bf{\Phi}}_{{\alpha }_{\sf DAC}} {\bf{f}}_{0} s_{\sf c} + \sqrt{P}{\bf{\Phi}}_{{\alpha }_{\sf DAC}}\sum_{k = 1}^{K} {\bf{f}}_k s_k + {\bf q}_{\sf DAC},
\end{align}
where $Q(\cdot)$, ${\bf{\Phi}}_{{\alpha }_{\sf DAC}} = {\rm diag}({\alpha }_{{\sf DAC},1},\ldots,{\alpha }_{{\sf DAC},N})\in \mathbb{R}^{N \times N}$, and ${\bf q }_{\sf DAC}\in \mathbb{C}^{N}$
denote an element-wise quantizer, a quantization loss matrix, and a quantization noise vector, respectively.
The quantization loss of the $i$th DAC, ${\alpha }_{{\sf DAC}, i} \in (0, 1)$, is defined as ${\alpha }_{{\sf DAC}, i} = 1 - {\beta }_{{\sf DAC}, i}$, where ${\beta }_{{\sf DAC}, i}$ is the normalized mean squared quantization error, i.e., ${ \beta }_{\sf DAC,\textit{i}} = \frac {\mathbb{E}{[| {x_i - x_{{\sf q}, i}}|^2]}} {\mathbb{E}{[|{x_i}|^2]}}$ \cite{fletcher2007robust,zhang2018mixed}.
Here, $x_i$ and $x_{{\sf q}, i}$ are the $i$th element of ${\bf x}$ and ${\bf x}_{\sf q}$, respectively, and the values of ${\beta }_{{\sf DAC}, i}$ vary based on the number of quantization bits of the $i$th antenna, $b_{{\sf DAC}, i}$.
Table 1 in \cite{fan2015uplink} depicts the values of ${\beta }_{{\sf DAC}, i}$ for $b_{{\sf DAC}, i} \leq 5$.
For $b_{{\sf DAC}, i} \geq 6$, ${\beta }_{{\sf DAC}, i}$ can be approximated as ${\beta }_{{\sf DAC}, i} \approx \frac {\pi \sqrt{3}}2 {2^{-2b_{{\sf DAC},i}}}$ \cite{gersho2012vector}.
In addition, ${{\bf q}_{\sf DAC}}$ is uncorrelated with the digital baseband signal $\bf{x}$  and follows ${{\bf q}_{\sf DAC}} \backsim {\mathcal {CN} ({\bf 0}_{N},\bf R_{{\bf q}_{\sf DAC}})}$ \cite{fletcher2007robust}.
Since the MMSE quantizer satisfies $\mathbb{E} {[ x_i | x_{{\sf q}, i} ]} = x_{{\sf q}, i}$ \cite{fletcher2007robust}, we can use the linear model in \eqref{eq:linear_quantization_model} to derive the covariance matrix of ${\bf q}_{\sf DAC}$:
\begin{align} 
    {\bf R}_{{\bf q}_{\sf DAC}}
    \nonumber
    = &\; {\bf{\Phi}}_{{\alpha }_{\sf DAC}}{\bf{\Phi}}_{{\beta }_{\sf DAC}}{\rm diag} \left({\mathbb E}{\left[\bf x\bf x^{\sf H}\right]}\right) \\
    \label{eq:covariance matrix of quantization noise2}
    = &\; {\bf{\Phi}}_{{\alpha }_{\sf DAC}}{\bf{\Phi}}_{{\beta }_{\sf DAC}}{\rm diag} \left(P{\bf F \bf F^{\sf H}}\right),
\end{align}
where ${\bf{\Phi}}_{{\beta}_{\sf DAC}} = {\rm diag}({\beta }_{{\sf DAC},{\sf 1}},\ldots,{\beta }_{{\sf DAC},{N}})\in \mathbb{R}^{N\times N}$ and ${\bf F} = \left[{\bf f}_{0},{\bf{f}}_1, \cdots, {\bf{f}}_K\right]\in \mathbb{C}^{N\times(K+1)}$.

Subsequently, the received  baseband signal is modeled as
\begin{align} \label{eq:received signal}
    {\bf y} = {\bf H}^{\sf H}{\bf x}_{\sf q} + {\bf n},
\end{align}
where ${\bf H}^{\sf H} \in \mathbb {C}^{K \times N}$ is a downlink channel matrix between the BS and $K$ users, and ${{\bf n}} \backsim {\mathcal {CN} ({\bf 0}_{K},\sigma^{2}{\bf I}_{K})}$ is an additive white Gaussian noise (AWGN) with zero mean and variance of ${\sigma^2}$.
The received signal of user $k$ is
\begin{align}
    \nonumber
    y_{k}
    = &\; \sqrt{P}{\bf h}_k^{\sf H}{\bf{\Phi}}_{{\alpha }_{\sf DAC}}{\bf{f}}_{0} s_{\sf c}
    + \sqrt{P}{\bf h}_k^{\sf H}{\bf{\Phi}}_{{\alpha }_{\sf DAC}}{\bf{f}}_{k} s_{k} \\
    \label{eq:recevied signal at user k}
    & + \sqrt{P} \sum_{i =1, i \neq k}^{K} {\bf h}_k^{\sf H}{\bf{\Phi}}_{{\alpha }_{\sf DAC}}{\bf{f}}_i s_i
    + {\bf h}_k^{\sf H}{\bf q }_{\sf DAC}+{n}_k,
\end{align}
where ${\bf h}_k$ denotes the $k$th column of the channel matrix $\bf H$ and $n_k$ denotes the $k$th element of ${\bf n}$.

We express the covariance matrix of user $k$'s channel as ${\bf R}_{{\bf h}_k} = \mathbb{E}[{\bf h}_k {\bf h}_k^{\sf H}]$.
Let the estimated channel of user $k$ be denoted as $\hat{{\bf h}}_k = {\bf h}_k - \Tilde{{\bf h}}_k$, where $\Tilde{{\bf h}}_k$ is the channel estimation error vector.
For the case of frequency division duplex (FDD) MU-MIMO systems, the accuracy of the CSIT error is mainly determined by the amount of feedback bits to quantize the downlink channel.
Then, considering the FDD system, $\hat{{\bf h}}_k$ is modeled as follows \cite{wagner2012large, choi2019GPI}:
\begin{align} \label{eq:channelvector}
    {\hat {\bf{h}}}_{k} = \sqrt{1-\kappa^2}{\bf h}_k + {\bf q}_k.
\end{align}
Here, ${\bf q}_k$ denotes the CSIT error of user $k$  \cite{wagner2012large, choi2019GPI}, and $\kappa \in [0, 1]$ is a parameter that indicates the channel quality.
The covariance matrix of the channel estimation error vector $\Tilde{{\bf h}}_k$ is subsequently derived as
\begin{align} \label{eq:channelcov_R1}
    {\bf{R}}_{\Tilde{{\bf h}}_k}
    = \mathbb{E}{\left[ \Tilde{{\bf h}}_k \Tilde{{\bf h}}_k^{\sf H} \right]} 
    = {\bf U}_k {\bf \Lambda}_k^{1/2}(2 -2\sqrt{1-\kappa^2}){\bf \Lambda}_k^{1/2}{\bf U}_k^{\sf H},
\end{align}
where ${\bf \Lambda}_k$ is the diagonal matrix composed of the non-zero eigenvalues of ${\bf{R}}_{{\bf h}_k}$, and ${\bf U}_k$ is the matrix composed of the corresponding eigenvectors.
We assume that the BS has knowledge of only the estimated channel $\hat{\bf h}_k$ and the channel estimation error covariance matrix ${\bf{R}}_{\Tilde{{\bf h}}_k}$ for all users.
We remark that although our proposed algorithm indeed works for both FDD and time division duplex (TDD) systems, we focus on the FDD system since a similar approach can be used for TDD systems.

\subsection{Performance Metrics and Problem Formulation}
According to the RSMA decoding principle \cite{joudeh2016sum},
the decodability of the common stream by all users must be guaranteed to ensure successful SIC; consequently, the code rate of the common stream $s_{\sf c}$ is set as the minimum rate among all users.
Let $R_{\sf c}({\bf F}) = \min_{k} \{R_{{\sf c},k}({\bf F}) \}$ be the SE
of $s_{\sf c}$. Then,
\begin{align}
    \label{eq:ergodic common message}
    R_{\sf c}({\bf F})
    = \min_{k} \left\{  \log_2 \left(1 + \frac{P|{\bf h}_k^{\sf H}{\bf{\Phi}}_{{\alpha }_{\sf DAC}}{\bf{f}}_{0}|^2} {\text{IUI}_{\text c}({\bf F}) + \text{QE}_{k}({\bf F}) + {\sigma}^2} \right)\right\},
\end{align}
where 
\begin{align}
    \text{IUI}_{\text c}({\bf F}) = &\; P {\sum_{i = 1}^{K} |{\bf{h}}_{k}^{\sf H}{\bf{\Phi}}_{{\alpha }_{\sf DAC}} {\bf{f}}_{i}|^2}, \\
    \label{eq:QEk}
    \text{QE}_{k}({\bf F}) = &\; {\bf h}_k^{\sf H}{\bf R}_{{\bf q}_{\sf DAC}}{\bf h}_k.
\end{align}
Let $R_k({\bf F})$ be the SE of $s_k$. With the common stream eliminated via SIC, $R_k({\bf F})$ can be written as
\begin{align}
    \label{eq:ergodic private message}
    R_{k}({\bf F}) &= \ \log_2 \left(1 + \frac{P|{\bf h}_k^{\sf H}{\bf{\Phi}}_{{\alpha }_{\sf DAC}}{\bf{f}}_{ k}|^2} {\text{IUI}_{k}({\bf F}) + \text{QE}_{k}({\bf F}) + {\sigma}^2} \right),
\end{align}
where $\text{IUI}_{k}({\bf F}) = P{\sum_{i = 1,i \neq k}^{K} |{\bf{h}}_{k}^{\sf H}{\bf{\Phi}}_{{\alpha }_{\sf DAC}} {\bf{f}}_{ i}|^2}$. Note how the interference is reduced due to the cancellation of the common stream.
Based on  \eqref{eq:ergodic common message} and \eqref{eq:ergodic private message}, the sum SE can be expressed as
\begin{align}
    { R}_\Sigma({\bf F}) = R_{\sf c}({\bf F}) + \sum_{k=1}^{K}{ R}_{k}({\bf F}).
\end{align}

We now turn our attention to the power constraint at the BS.
Let $P_{\sf tot}$ denote the total power that is allocated to the BS.
The power consumption at the BS must not exceed $P_{\sf tot}$, which concerns both the circuit power consumption and transmit power consumption.
That is, the sum of the transmit power consumption $\varsigma^{-1} P_{\sf tx}$ and the circuit power consumption $P_{\sf cir}$ should satisfy $\varsigma^{-1} P_{\sf tx} + P_{\sf cir} \leq P_{\sf tot}$.
Here, $\varsigma$ denotes the power amplifier (PA) efficiency, and $P_{\sf tx}$ is $P_{\sf tx} = 
{\text {Tr}}( \mathbb{E}{[{\bf x}_{\sf q}{\bf x}_{\sf q}^{\sf H}]} )$.

Before deriving the formula for $P_{\sf cir}$, we first define the power consumption of a DAC and an RF chain as shown in Fig.~\ref{fig:RS-SE-DAC-ADC system model}.
According to \cite{ribeiro2018energy, cui2005energy}, the DAC power consumption $P_{\sf DAC}$ (Watts) can be expressed as
\begin{align}
    P_{\sf DAC}(b_{{\sf DAC}, i}, f_s) = 1.5 \! \cdot \! 10^{-5} 2^{b_{{\sf DAC}, i}} + 9 \! \cdot \! 10^{-12}  f_s  b_{{\sf DAC}, i},
\end{align}
where $f_s$ denotes the sampling rate.
Meanwhile, if we let $P_{\sf LP}$, $P_{\sf M}$, $P_{\sf LO}$, $P_{\sf H}$, and $P_{\sf PA}$ denote the power consumption of the low-pass filter, mixer, local oscillator, $90^\circ$ hybrid with buffer, and power amplifier, respectively, the RF power consumption $P_{\sf RF}$ can be expressed as $P_{\sf RF} = 2P_{\sf LP} + 2P_{\sf M} + P_{\sf H}$.
Subsequently, the power consumption of the $i$th antenna is
\begin{align}
    P_{{\sf ant}, i} = 2P_{\sf DAC}(b_{{\sf DAC}, i}, f_s) + P_{\sf RF},
\end{align}
and the total power consumption of all the antennas is
\begin{align}
    \label{antenna power}
    P_{\sf ant}(\mathcal{A}) =
    {\sum_{i=1}^{N}}\mathbbm{1}_{\left\{i \in \mathcal{A} \right\}}
    P_{{\sf ant}, i},
\end{align}
where $\mathcal{A}$ is a set of active antennas, and $\mathbbm{1}_{\left\{E\right\}}$ is an indicator function that returns $\mathbbm{1}_{\left\{E\right\}} = 1$ when the event $E$ is true and $\mathbbm{1}_{\left\{E\right\}} = 0$ otherwise.
Finally, the circuit power consumption $P_{\sf cir}(\mathcal{A})$ is formulated as
\begin{align}
    \label{circuit power}
    P_{\sf cir}(\mathcal{A})
    = P_{\sf LO} + 
    P_{\sf ant}(\mathcal{A}).
\end{align}

Based on the RSMA sum SE and power consumption model, the optimization problem can then be formulated as
\begin{subequations} \label{eq:problem_main}
    \vspace{-5mm}
    \begin{align}
        \mathop{{\text{maximize}}}_{\mathcal{A},{\bf{F}}}& \;\;
        R_\Sigma({\bf F}(\mathcal{A})) \\
        \label{eq:problem_main_constraint}
        \mathop{{\text{subject to}}}& \;\; \varsigma^{-1} P_{\sf tx}({\bf F}(\mathcal{A})) + P_{\sf cir}(\mathcal{A}) \leq P_{\sf tot}, \\
        & \;\; P_{\sf tx}({\bf F}(\mathcal{A})) \leq P.
    \end{align}
\end{subequations}
Key challenges of this problem include the non-smoothness issue of the $\min\{\cdot\}$ function in \eqref{eq:ergodic common message} and the indicator function in \eqref{antenna power}, as well as the non-convexity of the problem with respect to the precoder.
In the next section, we tackle these challenges and propose an efficient algorithm to solve the  problem.

\section{Proposed Algorithm} \label{sec:main}

In this section, we propose a joint  precoding, antenna selection, and power control method to solve the problem in \eqref{eq:problem_main} by resolving the challenges.
To this end, we first derive a lower bound of the ergodic sum SE and further reformulate the bound by using a smooth approximation and an indicator approximation. 
Then, a low-complexity power iteration-based method is developed for joint antenna selection and precoding, and a gradient descent approach is adopted for power control.

\subsection{Conditional Average Rate Approach}

To leverage the channel error covariance matrix, we convert the  problem  in \eqref{eq:problem_main} using the conditional average rate approach \cite{joudeh2016sum}.
We begin by introducing the ergodic SEs:
\begin{align}
    \label{eq:ergodic SE common}
    {\bar R}_{{\sf c},k}({\bf F}) = &\; \mathbb{E}_{{\hat{\bf h}}_k}{\left[\mathbb{E}_{{\bf h}|{\hat{\bf h}}_k}{\left[ R_{{\sf c},k}({\bf F};{\hat{\bf{h}}}_k)\middle|  {\hat{\bf{h}}}_k\right]}\right]},
    \\
    {\bar R_{k}}({\bf F}) = &\; \mathbb{E}_{{\hat{\bf h}}_k}{\left[\mathbb{E}_{{\bf h}|{\hat{\bf h}}_k}{\left[ R_k({\bf F};{\hat{\bf{h}}}_k)\middle|  {\hat{\bf{h}}}_k\right]}\right]}.
\end{align}
Then, we aim to derive proper SE expressions as functions of the estimated channels and error covariance matrices by handling the expectations.

We subsequently rewrite the term related to the DAC quantization error covariance in \eqref{eq:QEk} to obtain a form that is more suitable for the GPI algorithm \cite{choi2019GPI}:
\begin{align}
    \nonumber
    \text{QE}_{k}({\bf F}) = &\; {\bf h}_k^{\sf H}{\bf R}_{{\bf q}_{\sf DAC}}{\bf h}_k \\
    \nonumber
    = &\; {\bf h}_k^{\sf H}{\bf{\Phi}}_{{\alpha }_{\sf DAC}}{\bf{\Phi}}_{{\beta }_{\sf DAC}} {\rm diag} \left( P \sum_{i = 0}^{K} {\bf{f}}_{i} {\bf{f}}_{i}^{\sf H} \right) {\bf h}_k \\
    = &\; P \sum_{i = 0}^{K} {\bf{f}}_{i}^{\sf H} {\bf{\Phi}}_{{\alpha }_{\sf DAC}}{\bf{\Phi}}_{{\beta }_{\sf DAC}} {\rm diag} \left( {\bf h}_k {\bf h}_k^{\sf H} \right) {\bf{f}}_{i}.
\end{align}
Now, if we incorporate the estimated channel $\hat{{\bf h}}_k = {\bf h}_k - \Tilde{{\bf h}}_k$ instead of the true channel ${\bf{h}}_k$ in \eqref{eq:recevied signal at user k}, we have
\begin{align}
    \nonumber
    y_{k}
    = &\; \sqrt{P}\left( \hat{{\bf h}}_k + \Tilde{{\bf h}}_k \right)^{\sf H}{\bf{\Phi}}_{{\alpha }_{\sf DAC}}{\bf{f}}_{0} s_{\sf c} + \sqrt{P}\left( \hat{{\bf h}}_k + \Tilde{{\bf h}}_k \right)^{\sf H}{\bf{\Phi}}_{{\alpha }_{\sf DAC}}{\bf{f}}_{k} s_{k} \\
    \nonumber
    & + \sqrt{P}\!\!\!\sum_{i =1, i \neq k}^{K} \!\!\! \left( \hat{{\bf h}}_k + \Tilde{{\bf h}}_k \right)^{\sf H}{\bf{\Phi}}_{{\alpha }_{\sf DAC}}{\bf{f}}_i s_i
    + \left( \hat{{\bf h}}_k + \Tilde{{\bf h}}_k \right)^{\sf H}{\bf q }_{\sf DAC}+{n}_k \\
    = &\; \sqrt{P} \hat{{\bf h}}_k^{\sf H}{\bf{\Phi}}_{{\alpha }_{\sf DAC}}{\bf{f}}_{0} s_{\sf c} + v_k.
\end{align}
Since the CSIT error $\Tilde{{\bf h}}_k$ is uncorrelated with the desired signal under MMSE channel estimation \cite{hassibi2003much},  $v_k$ is treated as noise uncorrelated with the desired signal. 
Then further considering it as Gaussian noise results in a lower bound on the mutual information expression \cite{choi2024joint}. 
We note that this approach is also aligned with the key principle of generalized mutual information \cite{yoo2006capacity,medard2000effect,lapidoth2002fading,ding2010maximum}.
Based on this approach, the lower bound for \eqref{eq:ergodic SE common} can be  derived  by treating $v_k$ as a Gaussian noise and using the Jensen's inequality as follows:
\begin{align}
    \label{eq:R_c,k lb}
    {\bar R}_{{\sf c},k}({\bf F})
    \geq &\; \mathbb{E}_{{\hat{\bf h}}_k}{\left[
    \log_2 \left(1 + \frac{|\hat{{\bf h}}^{\sf H}_k{\bf{\Phi}}_{{\alpha }_{\sf DAC}}{\bf{f}}_{0}|^2}
    {
    \text{IUI}_{\text c}'({\bf F})
    + {\rm{QE}}_k'({\bf F})
    + \frac{{\sigma}^2}{P}}
    \right) \right]} 
    \\\nonumber
    = &\; \mathbb{E}_{{\hat{\bf h}}_k}{\left[{{R}}_{{\sf c},k}^{{\sf {lb}}}({\bf F};\hat{\bf h}_k, {\bf{R}}_{\Tilde{{\bf h}}_k})\right]} 
    \\\label{eq:instanteous com rate}
    = & {\bar{R}}_{{\sf c},k}^{{\sf {lb}}}({\bf F};\hat{\bf h}_k, {\bf{R}}_{\Tilde{{\bf h}}_k}),
\end{align}
where 
\begin{align}
    \text{IUI}_{\text c}'({\bf F}) 
    = &\; {\sum_{i = 1}^{K}}{\bf{f}}_{i}^{\sf H}{\bf{\Phi}}_{{\alpha }_{\sf DAC}}^{\sf H} \left( {\hat{\bf h}}_k{\hat{\bf h}}^{\sf H}_k + {\bf{R}}_{\Tilde{{\bf h}}_k} \right) {\bf{\Phi}}_{{\alpha }_{\sf DAC}} {\bf{f}}_{i} \\
    \nonumber
    &+ {\bf{f}}_{0}^{\sf H}{\bf{\Phi}}_{{\alpha }_{\sf DAC}}^{\sf H}{\bf{R}}_{\Tilde{{\bf h}}_k}  {\bf{\Phi}}_{{\alpha }_{\sf DAC}} {\bf{f}}_{0}, \\
    \text{QE}_{k}'({\bf F})
    = &\; {\sum_{i = 0}^{K}} {\bf{f}}_{i}^{\sf H} {\bf{\Phi}}_{{\alpha }_{\sf DAC}}{\bf{\Phi}}_{{\beta }_{\sf DAC}} {\rm diag} \left( {\hat{\bf h}}_k{\hat{\bf h}_k}^{\sf H} + {\bf{R}}_{\Tilde{{\bf h}}_k} \right) {\bf{f}}_{i}.
\end{align}
A lower bound for ${\bar R_{k}}({\bf F})$, denoted as ${\bar R}_{k}^{{\sf {lb}}}({\bf F};\hat{\bf h}_k, {\bf{R}}_{\Tilde{{\bf h}}_k})$, can be similarly derived as:
\begin{align}
    {\bar R}_{k}({\bf F})
    \label{eq:R_k lb}
    \geq &\; \mathbb{E}_{{\hat{\bf h}}_k}{\left[
    \log_2 \left(1 + \frac{|\hat{{\bf h}}^{\sf H}_k{\bf{\Phi}}_{{\alpha }_{\sf DAC}}{\bf{f}}_{k}|^2}
    {
    \text{IUI}_{k}'({\bf F})
    + {\rm{QE}}_k'({\bf F})
    + \frac{{\sigma}^2}{P}}
    \right) \right]} 
    \\\nonumber
    = &\; \mathbb{E}_{{\hat{\bf h}}_k}{\left[{{R}}_{k}^{{\sf {lb}}}({\bf F};\hat{\bf h}_k, {\bf{R}}_{\Tilde{{\bf h}}_k})\right]} 
    \\\label{eq:instanteous priv rate}
    =& {\bar{R}}_{k}^{{\sf {lb}}}({\bf F};\hat{\bf h}_k, {\bf{R}}_{\Tilde{{\bf h}}_k}),
\end{align}
where 
\begin{align}
    \text{IUI}_{k}'({\bf F})
    \nonumber
    = &\; {\sum_{i = 1, i \neq k}^{K}} {\bf{f}}_{i}^{\sf H}{\bf{\Phi}}_{{\alpha }_{\sf DAC}}^{\sf H} \left( {\hat{\bf h}}_k{\hat{\bf h}}^{\sf H}_k + {\bf{R}}_{\Tilde{{\bf h}}_k} \right) {\bf{\Phi}}_{{\alpha }_{\sf DAC}} {\bf{f}}_{i} \\
    &  + {\bf{f}}_{k}^{\sf H}{\bf{\Phi}}_{{\alpha }_{\sf DAC}}^{\sf H}{\bf{R}}_{\Tilde{{\bf h}}_k}  {\bf{\Phi}}_{{\alpha }_{\sf DAC}} {\bf{f}}_{k}.
\end{align}

Using the expressions in \eqref{eq:R_c,k lb} and \eqref{eq:R_k lb}, we can now express the ergodic sum SE as:
\begin{align}   
    \nonumber
    {\bar R}_{\Sigma}(\bF) &= \min_{k} \! \left\{\mathbb{E}_{{\hat{\bf h}}_k} \!\!
    \left[ \! { R}_{{\sf c},k}^{{\sf {lb}}}({\bf F};\hat{\bf h}_k, {\bf{R}}_{\Tilde{{\bf h}}_k}) \! \right] \! \right\} \!
    +\! \sum_{k=1}^{K}\mathbb{E}_{{\hat{\bf h}}_k} \!\!
    \left[\!{{R}}_{k}^{{\sf {lb}}}({\bf F};\hat{\bf h}_k, {\bf{R}}_{\Tilde{{\bf h}}_k})\!\right] 
    \\\nonumber
    &\ge \mathbb{E}_{{\hat{\bf h}}_k} {\left[\min_{k}\left\{{ R}_{{\sf c},k}^{{\sf {lb}}}({\bf F};\hat{\bf h}_k, {\bf{R}}_{\Tilde{{\bf h}}_k})\right\} + \sum_{k=1}^{K}{{R}}_{k}^{{\sf {lb}}}({\bf F};\hat{\bf h}_k, {\bf{R}}_{\Tilde{{\bf h}}_k})\right]} 
    \\
    \label{eq:ergodic sum SE}
    &= \mathbb{E}_{{\hat{\bf h}}_k} {\left[{ R}_\Sigma^{\sf lb}({\bf F};\hat{\bf h}_k, {\bf{R}}_{\Tilde{{\bf h}}_k})\right]},
\end{align}
where
\begin{align}
    \label{eq:Rsum_lb}
   {R}_\Sigma^{\sf lb}({\bf F};\hat{\bf h}_k, {\bf{R}}_{\Tilde{{\bf h}}_k}) 
   = \min_{k} \! \left\{{ R}_{{\sf c},k}^{{\sf {lb}}}({\bf F};\hat{\bf h}_k, {\bf{R}}_{\Tilde{{\bf h}}_k})\right\} \!
   + \! \sum_{k=1}^{K} \! {{R}}_{k}^{{\sf {lb}}}({\bf F};\hat{\bf h}_k, {\bf{R}}_{\Tilde{{\bf h}}_k}).
\end{align}
It is important to note that since the BS has knowledge of the estimated channel $\hat{\bf h}_k$, we utilize the derived lower bound with the conditional average over the channel estimation error ${R}_\Sigma^{\sf lb}({\bf F};\hat{\bf h}_k, {\bf{R}}_{\Tilde{{\bf h}}_k})$ in \eqref{eq:Rsum_lb}, in place of the instantaneous SE ${R}_\Sigma(\bF)$ in the original problem in \eqref{eq:problem_main}. 
This approach enables us to leverage the information provided by both the estimated channel and its corresponding error covariance matrix. 
Consequently, we reformulate the optimization problem in \eqref{eq:problem_main} by replacing $R_\Sigma({\bf F}(\mathcal{A}))$ with ${R}_\Sigma^{\sf lb}({\bf F}(\mathcal{A});\hat{\bf h}_k, {\bf{R}}_{\Tilde{{\bf h}}_k})$:
\begin{subequations} 
    \label{eq:problem_main_lb}
    \begin{align}
        \mathop{{\text{maximize}}}_{\mathcal{A},{\bf{F}}} & \;\;
        {R}_\Sigma^{\sf lb}({\bf F}(\mathcal{A});\hat{\bf h}_k, {\bf{R}}_{\Tilde{{\bf h}}_k})
        \\
        \label{eq:problem_main_constraint_lb}
        \mathop{{\text{subject to}}}& \;\; \varsigma^{-1} P_{\sf tx}({\bf F}(\mathcal{A})) + P_{\sf cir}(\mathcal{A}) \leq P_{\sf tot}, 
        \\
        & \;\; P_{\sf tx}({\bf F}(\mathcal{A})) \leq P.
    \end{align}
\end{subequations}

\subsection{GPI-Friendly Problem Reformulation}

To employ the GPI method \cite{choi2019GPI}, we first divide the problem in \eqref{eq:problem_main_lb} into two subproblems by decomposing the precoder ${\bf F}$ into a normalized precoding matrix ${\bf W}$ with fixed power that determines the precoding direction, and a power scaling parameter ${\tau}$ that determines the transmit power $P_{\sf tx}$.

Let  $P_{\sf tx}$ be expressed in terms of the scalar weight $\tau \in (0, 1]$ and the maximum transmit power $P$ as
\begin{align}
    \label{eq:P_tx}
    P_{\sf tx}(\tau) = {\text {Tr}}\left( \mathbb{E}{\left[{\bf x}_{\sf q}{\bf x}_{\sf q}^{\sf H}\right]}\right) = {\tau}P.
\end{align}
Since ${\text {Tr}}( \mathbb{E}{[ {\bf x}_{\sf q}{\bf x}_{\sf q}^{\sf H} ]} ) 
= {\text {Tr}}(P{\bf{\Phi}}_{{\alpha }_{\sf DAC}}{\bf F \bf F^{\sf H}})$ from the definition of the trace operator and ${\bf{\Phi}}_{{\alpha }_{\sf DAC}} = {\bf I}_N-{\bf{\Phi}}_{{\beta}_{\sf DAC}}$,  \eqref{eq:P_tx} can be rewritten as
\begin{align}
    \label{eq:power scaling}
    {\text {Tr}}\left({\bf{\Phi}}_{{\alpha }_{\sf DAC}}{\bf F \bf F^{\sf H}}\right) = {\tau}.
\end{align}
To incorporate the transmit power scaling parameter $\tau$ into the precoder, we define a normalized precoding matrix ${\bf W} = \left[{\bf w}_0,{\bf w}_1, \cdots, {\bf w}_K\right]$, where each precoder ${\bf w}_k$ is expressed as
\begin{align}
    \label{eq:weighted precoder}
    {\bf w}_k = \frac{1}{\sqrt{\tau}} {\bf{\Phi}}_{{\alpha }_{\sf DAC}}^{1/2}{\bf f}_k.
\end{align}
Then, by using ${\bf W}$, \eqref{eq:power scaling} can be rewritten as
\begin{align}
    \label{eq:trace_W}
    {\text {Tr}}\left({\bf W \bf W^{\sf H}}\right) = 1.
\end{align}
By using this normalized precoder, the problem in \eqref{eq:problem_main_lb} can be decomposed into two parts: the optimization of the precoding direction component ${\bf W}$ and the transmit power parameter $\tau$.

We now turn our attention to optimizing the normalized precoder ${\bf W}$. To find ${\bf W}$ for a fixed $\tau$, we use the GPI algorithm \cite{choi2019GPI} and subsequently reformulate the problem in \eqref{eq:problem_main_lb} into a GPI-friendly form.
Let ${\bar{\bf w}} = {{\rm vec}}\left(\bf W\right)$ be the vectorized normalized precoder, which can be seen as a precoding direction as $\|\bar {\bf w}\|^2=1$ from \eqref{eq:trace_W}.
By replacing $\frac{\sigma^2}{P}$ with $\frac{\sigma^2}{P}\|\bar {\bf w}\|^2$, we represent ${R}_{{\sf c},k}^{{\sf {lb}}}({\bf F};\hat{\bf h}_k, {\bf{R}}_{\Tilde{{\bf h}}_k})$ in \eqref{eq:R_c,k lb} into a Rayleigh quotient form  in logarithm as a function of $\bar{\bf w}$:
\begin{align}
    \label{eq:rewrite_block}
    {R}_{{\sf c},k}^{{\sf {lb}}}({\bar{\bf w}}, \tau) = \log_2 \left( \frac{\bar {\bf{w}}^{\sf H} {\bf{A}}_{{\sf c},k} \bar{\bf{w}}}{\bar{\bf{w}}^{\sf H} {\bf{B}}_{{\sf c},k} \bar{\bf{w}} } \right),
\end{align}
where
\begin{align}
    \nonumber
    & {\bf G}_k = \left({\bf{\Phi}}_{{\alpha }_{\sf DAC}}^{1/2}\right)^{\sf H} \left( {\hat{\bf h}}_k{\hat{\bf h}}^{\sf H}_k + {\bf{R}}_{\Tilde{{\bf h}}_k} \right) {\bf{\Phi}}_{{\alpha }_{\sf DAC}}^{1/2} \\
    &\;\;\;\;\;\;\;\;\;\;\;\;\;\; 
    + {\bf{\Phi}}_{{\beta }_{\sf DAC}}{\rm diag}\left( {\hat{\bf h}}_k{\hat{\bf h}}^{\sf H}_k + {\bf{R}}_{\Tilde{{\bf h}}_k} \right), \\
    & {\bf{A}}_{{\sf c},k} \! = {\rm blkdiag} \left({\bf G}_k, \cdots , {\bf G}_k \right) + {\bf{I}}_{N(K+1)} \frac{ \sigma^2}{\tau P}, \\
    & {\bf{B}}_{{\sf c},k} \!
    = \! {\bf{A}}_{{\sf c},k} \!
    -\! {\rm blkdiag} \! \left( \! \left( {{\bf{\Phi}}_{{\alpha }_{\sf DAC}}^{1/2}}\right)^{\! \sf H} \! \hat{{\bf h}}_k \hat{{\bf h}}^{\sf H}_k {\bf{\Phi}}_{{\alpha }_{\sf DAC}}^{1/2},{{\bf{0}}_{N}},\!\cdots\!, {{\bf{0}}_N}\!\right)\!.   
\end{align}
Here, ${\bf{A}}_{{\sf c},k}$ and ${\bf{B}}_{{\sf c},k}$ both represent block diagonal matrices of size $N(K+1)\times N(K+1)$.
Similarly, we also represent $R_{k}^{\sf lb}({\bf F};\hat{\bf h}_k, {\bf{R}}_{\Tilde{{\bf h}}_k})$ in \eqref{eq:R_k lb} as a Rayleigh quotient form:
\begin{align}
    \label{eq:rewrite_block_pri}
    R_{k}^{\sf lb}({\bar{\bf w}}, \tau) = \log_2 \left( \frac{\bar {\bf{w}}^{\sf H} {\bf{A}}_{k}\bar{\bf{w}}}{\bar{\bf{w}}^{\sf H} {\bf{B}}_{k} \bar{\bf{w}} } \right),
\end{align}
where
\begin{align}
    & {\bf G}_{{\sf c}, k} 
    = \left({\bf{\Phi}}_{{\alpha }_{\sf DAC}}^{1/2}\right)^{\sf H} \left( {\hat{\bf h}}_k{\hat{\bf h}}^{\sf H}_k + {\bf{R}}_{\Tilde{{\bf h}}_k} \right) {\bf{\Phi}}_{{\alpha }_{\sf DAC}}^{1/2}, \\
    & {\bf{A}}_{k}\! = {\rm blkdiag} \left( {\bf G}_k\!
    - {\bf G}_{{\sf c}, k}\!
    ,{\bf G}_k,\! \cdots \!, {\bf G}_k\right)\!
    + {\bf{I}}_{N(K+1)} \frac{ \sigma^2}{\tau P}, 
    \\
    & {\bf{B}}_{k} \! = 
    {\bf{A}}_{k} \!  - \! {\rm blkdiag} \! \bigg( \! {{\bf{0}}_N}\!,\! \cdots\! , \! \underbrace{ \left(\!{{\bf{\Phi}}_{{\alpha }_{\sf DAC}}^{1/2}}\! \right)^{\! \! \sf H} \! \! \hat{{\bf h}}_k \hat{{\bf h}}^{\sf H}_k {{\bf{\Phi}}_{{\alpha }_{\sf DAC}}^{1/2}}}_{{\text{the} \;(k+1){\text{th block}}}}, \! {{\bf{0}}_N} \! , \! \cdots\! , \! {{\bf{0}}_N} \! \! \bigg) \! .
\end{align}
Incorporating $R_{{\sf c},k}^{\sf lb}({\bar{\bf w}}, \tau)$ in \eqref{eq:rewrite_block} and $R_{k}^{\sf lb}({\bar{\bf w}}, \tau)$ in \eqref{eq:rewrite_block_pri} into the formula for ${R}_\Sigma^{\sf lb}({\bf F};\hat{\bf h}_k, {\bf{R}}_{\Tilde{{\bf h}}_k})$ in \eqref{eq:Rsum_lb}, we have
\begin{align}
    \label{eq:Rsum_lb (W) min}
   {R}_\Sigma^{\sf lb}({\bar{\bf w}}, \tau) 
   = \min_{k} \! \left\{
   \log_2 \! \left( \frac{\bar {\bf{w}}^{\sf H} {\bf{A}}_{{\sf c},k} \bar{\bf{w}}}{\bar{\bf{w}}^{\sf H} {\bf{B}}_{{\sf c},k} \bar{\bf{w}} } \right) \!
   \right\}
   \! + \! \sum_{k=1}^{K} \!
   \log_2 \! \left( \frac{\bar {\bf{w}}^{\sf H} {\bf{A}}_{k}\bar{\bf{w}}}{\bar{\bf{w}}^{\sf H} {\bf{B}}_{k} \bar{\bf{w}} } \right) \!.
\end{align}
In \eqref{eq:Rsum_lb (W) min}, there remains a non-smoothness issue of the $\min\{\cdot\}$ function.
This obstacle can be avoided by using the LogSumExp technique \cite{shen2010dual} to approximate the term with the $\min\{\cdot\}$ function such that it is in a smooth form:
\begin{align}
    \label{eq:approxcom}
    \nonumber
    \min_{k \in \CMcal{K }} \{R_{{\sf c},k}^{\sf lb}({\bar{\bf w}}, \tau)\} 
    \approx &\; -{a} \ln \left( \sum_{k = 1}^{K} \exp\left( -\frac{1}{a} R_{{\sf c},k}^{\sf lb}({\bar{\bf w}}, \tau) \right) \right) \\
    = &\;  -a \ln \left(\sum_{k = 1}^{K} \left( \frac{\bar {\bf{w}}^{\sf H} {\bf{A}}_{{\sf c},k} \bar {\bf{w}}}{\bar {\bf{w}}^{\sf H} {\bf{B}}_{{\sf c},k}  \bar {\bf{w}} }  \right)^{-\frac{1}{a\ln2}} \right).
\end{align}
Here, the approximation becomes tight as the positive constant $a$ approaches $0^+$.

Meanwhile, we also approximate the indicator function $\mathbbm{1}_{\left\{i \in \mathcal{A} \right\}}$ in \eqref{antenna power} using the following approximation \cite{sriperumbudur2011majorization}:
\begin{align}
    \label{rho approximation}
    \mathbbm{1}_{\left\{ |x|^2 > 0 \right\}} \approx \frac{\log_2(1 + |x|^2/\rho)}{\log_2(1 + 1/\rho)},
\end{align}
where the approximation becomes tight as the positive constant $\rho$ approaches $0^+$.
We note that the approximation in \eqref{rho approximation} can be verified by using L'H\^opital's rule \cite{sriperumbudur2011majorization}.
If we denote ${\Tilde{\bf f}}_i$ as the $i$th row of the precoder ${\bf F}$, we remark that the $i$th antenna is active if and only if $\lVert {\Tilde{\bf f}}_i \rVert^2 > 0$.
Since ${\bf F} \propto {{\bf{\Phi}}_{{\alpha }_{\sf DAC}}^{-1/2}} \bf W$, 
$P_{\sf ant}(\mathcal{A})$ in \eqref{antenna power} can be approximated as
\begin{align}
    \label{antenna power2}
    \nonumber
    P_{\sf ant}({\bf F})
    = &\; {\sum_{i=1}^{N}}\mathbbm{1}_{\left\{\lVert {\Tilde{\bf f}}_i \rVert^2 > 0\right\}}P_{{\sf ant}, i}
    = {\sum_{i=1}^{N}}\mathbbm{1}_{\left\{\lVert {\Tilde{\bf w}}_i / \sqrt{\alpha_{{\sf DAC}, i}} \rVert^2 > 0\right\}}P_{{\sf ant}, i} \\
    \approx &\; {\sum_{i=1}^{N}}\log_2 \left(1+\rho^{-1}\left\lVert \frac{{\Tilde{\bf w}}_i}{\sqrt{\alpha_{{\sf DAC}, i}}} \right\rVert^2 \right)^\frac{P_{{\sf ant}, i}}{\log_2(1+\rho^{-1})}.
\end{align}
This approximation not only makes the power constraint in \eqref{eq:problem_main_constraint_lb} more tractable but also allows a joint optimization of precoding and antenna selection without explicitly considering $\mathcal{A}$.
We remark that the accuracy of the approximation in \eqref{antenna power2} is sufficient to distinguish between active and inactive antennas.
Let ${\bf e}_i$ denote the $N$-dimensional standard basis vector with 1 at the $i$th element and zeros elsewhere.
Denoting the $i$th row of ${\bf W}$ as ${\Tilde{\bf w}}_i$,
we have ${\Tilde{\bf w}}_i / \sqrt{\alpha_{{\sf DAC}, i}} = {\bf e}_i^{\sf H} {{\bf{\Phi}}_{{\alpha }_{\sf DAC}}^{-1/2}} {\bf W}$, and thus, the norm in \eqref{antenna power2} can be computed as
\begin{align}
    \nonumber
    \left\lVert \frac{{\Tilde{\bf w}}_i}{\sqrt{\alpha_{{\sf DAC}, i}}} \right\rVert^2
    = &\; {\bf e}_i^{\sf H} {{\bf{\Phi}}_{{\alpha }_{\sf DAC}}^{-1/2}} {\bf W}{\bf W}^{\sf H} {{\bf{\Phi}}_{{\alpha }_{\sf DAC}}^{-1/2}} {\bf e}_i \\
    \nonumber
    = &\; \text{vec}({\Tilde{\bf e}}_i^{\sf H} {\bf W}{\bf W}^{\sf H} {\Tilde{\bf e}}_i) \\
    \nonumber
    \stackrel{\mathclap{(a)}}{=} &\; \left(\left({\Tilde{\bf e}}_i^{\sf T}{\bf W}^*\right) \otimes \Tilde{\bf e}_i^{\sf H}\right)\text{vec}\left( {\bf W} \right) \\
    \nonumber
    \stackrel{\mathclap{(b)}}{=} &\; \left[ \left\{ \left( {\bf I}_{K+1} \otimes \Tilde{\bf e}_i^{\sf T} \right) \text{vec}\left({\bf W}^*)\right) \right\}^{\sf T} \otimes \Tilde{\bf e}_i^{\sf H} \right] \text{vec}\left({\bf W}\right) \\
    \stackrel{\mathclap{(c)}}{=} &\; {\bar{\bf w}}^{\sf H} \left( {\bf I}_{K+1} \otimes \Tilde{\bf e}_i \otimes \Tilde{\bf e}_i^{\sf H} \right) {\bar{\bf w}},
\end{align}
where $\Tilde{\bf e}_i = {{\bf{\Phi}}_{{\alpha }_{\sf DAC}}^{-1/2}} {\bf e}_i$, steps (a) and (b) are based on $\text{vec}(\bf ABC) = (\bf{C}^{\sf T} \otimes \bf{A} ) \text{vec}(\bf B)$, and step (c) is based on $({\bf A} \otimes {\bf B})^{\sf T} = {\bf A}^{\sf T} \otimes {\bf B}^{\sf T}$.
Consequently, the approximation of $P_{\sf ant}({\bf F})$ in \eqref{antenna power2} can be rewritten as
\begin{align}
    \nonumber
    \Tilde{P}_{\sf ant}({\bar{\bf w}})
    = &\; {\sum_{i=1}^{N}}\log_2 \left(1+\rho^{-1} {\bar{\bf w}}^{\sf H} {\bf I}_{K+1} \otimes \Tilde{\bf e}_i \otimes \Tilde{\bf e}_i^{\sf H} {\bar{\bf w}} \right)^\frac{P_{{\sf ant}, i}}{\log_2(1+\rho^{-1})} 
    \\ \label{eq:antenna power3}
    = &\; \frac{1}{\ln{2}} {\sum_{i=1}^{N}}\ln \left( {\bar{\bf w}}^{\sf H} {\bf E}_i {\bar{\bf w}} \right)^\frac{P_{{\sf ant}, i}}{\log_2(1+\rho^{-1})},
\end{align}
where ${\bf E}_i = {\bf I}_{N(K+1)} + \rho^{-1}{\bf I}_{K+1} \otimes \Tilde{\bf e}_i \otimes \Tilde{\bf e}_i^{\sf H}$.
Using \eqref{eq:antenna power3}, $P_{\sf cir}(\mathcal{A})$ in \eqref{circuit power} can also be rewritten as $\Tilde{P}_{\sf cir}({\bar{\bf w}}) = P_{\sf LO} + \Tilde{P}_{\sf ant}({\bar{\bf w}})$.
Finally, based on \eqref{eq:rewrite_block_pri}, \eqref{eq:approxcom}, and \eqref{eq:antenna power3}, we reformulate the optimization problem in \eqref{eq:problem_main_lb}  without explicitly considering
$\mathcal{A}$ as follows:
\begin{subequations} \label{eq:problem_new_Blk_q}
    \begin{align}
        \mathop{{\text{maximize}}}_{\bar {\bf{w}},\tau}& \;\;
        \ln \left( \! \sum_{k = 1}^{K} \! \left( \! \frac{\bar {\bf{w}}^{\sf H} {\bf{A}}_{{\sf c},k} \bar {\bf{w}}}{\bar {\bf{w}}^{\sf H} {\bf{B}}_{{\sf c},k}  \bar {\bf{w}} } \! \right)^{\! \! -\frac{1}{a\ln \! 2}} \! \right)^{\! \!-a}
        \! \! \! + \!
        \sum_{k = 1}^{K} \! \ln \! \left( \! \frac{\bar {\bf{w}}^{\sf H} {\bf{A}}_k \bar {\bf{w}}}{\bar {\bf{w}}^{\sf H} {\bf{B}}_k \bar {\bf{w}}} \! \right)^{\! \! \frac{1}{\ln \! 2}} \\
        \label{eq:constraint_new_new_Blk}
        {\text{subject to}} & \;\; 
        \frac{\tau}{\varsigma} P
        \! + \! 
        P_{\sf LO} 
        \! + \! \! {\sum_{i=1}^{N}} \! \ln \! \left( \! {\bar{\bf w}}^{\sf H} {\bf E}_i {\bar{\bf w}} \! \right)^{\! \frac{P_{{\sf ant}, i}}{(\ln{ \!2}) \! \log_2 \! (1+\rho^{-1})}}
        \! \leq P_{\sf tot},
        \\\label{eq:const_bar_w}
        &\;\; \|{\bar{{\bf w}}}\| = 1, 
        \\\label{eq:const_tau}
        &\;\; \tau \in (0, 1].
    \end{align}
\end{subequations}
We note that even without explicit consideration of $\mathcal{A}$, \eqref{eq:constraint_new_new_Blk} can enforce the algorithm to selectively turn off certain antennas to reduce the circuit power.

To solve the problem in \eqref{eq:problem_new_Blk_q}, we  temporarily relax the constraints \eqref{eq:const_bar_w} and \eqref{eq:const_tau} and reconsider them when proposing the algorithms.
Then, ignoring the constraints \eqref{eq:const_bar_w} and \eqref{eq:const_tau}, we  express the Lagrangian function of 
the problem in \eqref{eq:problem_new_Blk_q} as
\begin{align}
    \nonumber
    L({\bar{{\bf w}}}, \tau, \mu) 
    = &\; 
    \ln \left( \sum_{k = 1}^{K} \! \left(  \frac{\bar {\bf{w}}^{\sf H} {\bf{A}}_{{\sf c},k} \bar {\bf{w}}}{\bar {\bf{w}}^{\sf H} {\bf{B}}_{{\sf c},k}  \bar {\bf{w}} }  \right)^{\! - \frac{1}{a\ln2}}  \right)^{\! -a}
    \! \! + \sum_{k = 1}^{K}  \ln \! \left(  \frac{\bar {\bf{w}}^{\sf H} {\bf{A}}_k \bar {\bf{w}}}{\bar {\bf{w}}^{\sf H} {\bf{B}}_k \bar {\bf{w}}}  \right)^{\! \frac{1}{\ln2}} \\
    \nonumber
    & \quad
    - \mu \left(
        \frac{\tau}{\varsigma} P
        \! + \! 
        P_{\sf LO} 
        \! + \! \! {\sum_{i=1}^{N}} \! \ln \! \left( \! {\bar{\bf w}}^{\sf H} {\bf E}_i {\bar{\bf w}} \! \right)^{\! \frac{P_{{\sf ant}, i}}{(\ln{ \!2}) \! \log_2 \! (1+\rho^{-1})}}
        \! \! - \! P_{\sf tot}
    \right)
    \\
    \label{lagrangian}
    =& \ln{\lambda({\bar{{\bf w}}}, \tau, \mu)},
\end{align}
where $\mu$ is the Lagrange multiplier, and $\lambda({\bar{{\bf w}}}, \tau, \mu)$ is
\begin{align}
    &\; \lambda({\bar{{\bf w}}}, \tau, \mu) = \\
    \nonumber
    &\; \left\{ \sum_{k=1}^{K} \! \left( \frac{\bar {\bf{w}}^{\sf H} {\bf{A}}_{{\sf c},k} \bar {\bf{w}}}{\bar {\bf{w}}^{\sf H} {\bf{B}}_{{\sf c},k}  \bar {\bf{w}} }  \right)^{\!\! -\frac{1}{a\ln2}} \right\}^{\!\!-a}
    \frac{e^{-\mu \left( \frac{\tau}{\varsigma}P - P_{\sf tot} + P_{\sf LO} \right) } \prod\limits_{l=1}^{K} \left( \frac{\bar {\bf{w}}^{\sf H} {\bf{A}}_{l} \bar {\bf{w}}}{\bar {\bf{w}}^{\sf H} {\bf{B}}_{l}  \bar {\bf{w}} }  \right)^{\frac{1}{\ln2}} }{\prod\limits_{i=1}^{N} \left( \bar {\bf{w}}^{\sf H} {\bf{E}}_{i} \bar {\bf{w}} \right)^ \frac{\mu P_{{\sf ant}, i}}{(\ln{2}) \log_2(1+\rho^{-1})} }.
\end{align}

\subsection{Precoding Direction Optimization}

To optimize $\bar {\bf w}$ in \eqref{lagrangian}, we first derive the Karush-Kuhn-Tucker (KKT) stationary condition 
and cast the condition into a generalized eigenvalue problem as shown in Lemma~\ref{lem:main}.
\begin{lemma}
    \label{lem:main}
    For given values of $\tau$ and $\mu$, the optimization problem in \eqref{lagrangian} satisfies the first-order optimality condition if the following criterion is met:
    \begin{align}
        \label{functional eigenvalue problem}
        {\bf B}_{\sf KKT}^{-1}({\bar{{\bf w}}}; \tau, \mu) {\bf A}_{\sf KKT}({\bar{{\bf w}}}; \tau, \mu) \bar{{\bf w}} = \lambda({\bar{{\bf w}}}; \tau, \mu)\bar{{\bf w}},
    \end{align}
    where
    \begin{align}
        \label{eq:lem_A_kkt_Q}
        & {\bf A}_{\sf KKT}({\bar{{\bf w}}}; \tau, \mu) = 
        \\\nonumber
        &\; \lambda_{\sf num}({\bar{{\bf w}}}; \tau, \mu)
        \sum\limits_{k = 1}^{K} \left\{
        \frac{
            \left( \frac{\bar {\bf{w}}^{\sf H} {\bf{A}}_{{\sf c},k} \bar{\bf{w}}}{\bar{\bf{w}}^{\sf H} {\bf{B}}_{{\sf c},k} \bar{\bf{w}} } \right)^{-\frac {1} {a\ln2}} \left( \frac{{\bf{A}}_{{\sf c},k}}{\bar{\bf{w}}^{\sf H} {\bf{A}}_{{\sf c},k} \bar{\bf{w}}} \right)
            }
            {
            \sum\limits_{l = 1}^{K} \left( \frac{\bar {\bf{w}}^{\sf H} {\bf{A}}_{{\sf c},l} \bar{\bf{w}}}{\bar{\bf{w}}^{\sf H} {\bf{B}}_{{\sf c},l} \bar{\bf{w}} } \right)^{-\frac {1} {a\ln2}}
            }
        + \frac{
            {\bf{A}}_k 
            }
            {\bar{\bf{w}}^{\sf H} {\bf{A}}_k \bar{\bf{w}}
            }
            \right\}, 
    \end{align}
    \begin{align}
        \nonumber
        & {\bf B}_{\sf KKT}({\bar{{\bf w}}}; \tau, \mu) =  \;  \lambda_{\sf{den}}({\bar{{\bf w}}}; \tau, \mu)\left(\frac{\mu}{\log_2(1+\rho^{-1})} \sum\limits_{i = 1}^{N}\frac{P_{{\sf ant}, i}{\bf{E}}_i }{\bar{\bf{w}}^{\sf H} {\bf{E}}_i \bar{\bf{w}} } \right.
        \\\label{eq:lem_B_kkt_Q}
        &\;  \left.
        \quad \quad +\sum\limits_{k = 1}^{K} 
        \frac{\left( \frac{\bar {\bf{w}}^{\sf H} {\bf{A}}_{{\sf c},k} \bar{\bf{w}}}{\bar{\bf{w}}^{\sf H} {\bf{B}}_{{\sf c},k} \bar{\bf{w}} } \right)^{-\frac {1} {a\ln2}} \left( \frac{{\bf{B}}_{{\sf c},k} }{\bar{\bf{w}}^{\sf H} {\bf{B}}_{{\sf c},k} \bar{\bf{w}} } \right) }{\sum\limits_{l = 1}^{K} \left( \frac{\bar {\bf{w}}^{\sf H} {\bf{A}}_{{\sf c},l} \bar{\bf{w}}}{\bar{\bf{w}}^{\sf H} {\bf{B}}_{{\sf c},l} \bar{\bf{w}} } \right)^{-\frac {1} {a\ln2}}}
        + \sum\limits_{l = 1}^{K}\frac{{\bf{B}}_k }{\bar{\bf{w}}^{\sf H} {\bf{B}}_k \bar{\bf{w}} } \right).
    \end{align}
    Here, $\lambda_{\sf num}({\bar{{\bf w}}}; \tau, \mu)$ and $\lambda_{\sf den}({\bar{{\bf w}}}; \tau, \mu)$ can be any pair of values that satisfies $\lambda({\bar{{\bf w}}}; \tau, \mu) = \frac{\lambda_{\sf num}({\bar{{\bf w}}}; \tau, \mu)}{\lambda_{\sf den}({\bar{{\bf w}}}; \tau, \mu)}$.
\end{lemma}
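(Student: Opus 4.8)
The plan is to write down the stationarity condition $\partial L/\partial\bar{\bf w}^{*}={\bf 0}$ for the Lagrangian in \eqref{lagrangian} (for the problem in \eqref{eq:problem_new_Blk_q} with the constraints \eqref{eq:const_bar_w} and \eqref{eq:const_tau} relaxed, as set up just before \eqref{lagrangian}) and rearrange it algebraically into the generalized eigenvalue form \eqref{functional eigenvalue problem}. Since $L=\ln\lambda$ with $\lambda(\bar{\bf w};\tau,\mu)>0$, a stationary point of $L$ is a stationary point of $\lambda$, but it is cleaner to differentiate $L$: after $\ln(x^{c})=c\ln x$ and the LogSumExp step, $L$ is a weighted sum of logarithms of ratios of the quadratic forms $\bar{\bf w}^{\sf H}{\bf M}\bar{\bf w}$, plus a term depending only on $\tau$. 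I would first record the elementary Wirtinger identity, valid for Hermitian ${\bf A},{\bf B}\succ{\bf 0}$,
\[
\frac{\partial}{\partial\bar{\bf w}^{*}}\ln\!\Big(\frac{\bar{\bf w}^{\sf H}{\bf A}\bar{\bf w}}{\bar{\bf w}^{\sf H}{\bf B}\bar{\bf w}}\Big)=\frac{{\bf A}\bar{\bf w}}{\bar{\bf w}^{\sf H}{\bf A}\bar{\bf w}}-\frac{{\bf B}\bar{\bf w}}{\bar{\bf w}^{\sf H}{\bf B}\bar{\bf w}},
\]
and observe that ${\bf A}_{{\sf c},k},{\bf B}_{{\sf c},k},{\bf A}_{k},{\bf B}_{k}$ are Hermitian positive definite (each carries the $\frac{\sigma^{2}}{\tau P}{\bf I}_{N(K+1)}$ shift, and the rank-one terms subtracted in the ${\bf B}$-matrices are positive semidefinite) and ${\bf E}_{i}$ is the identity plus a positive semidefinite matrix, hence positive definite; this last point also ensures that ${\bf B}_{\sf KKT}$, which is a sum of these matrices with positive weights, is invertible.

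Next I would differentiate $L$ term by term. With $r_{{\sf c},k}=\frac{\bar{\bf w}^{\sf H}{\bf A}_{{\sf c},k}\bar{\bf w}}{\bar{\bf w}^{\sf H}{\bf B}_{{\sf c},k}\bar{\bf w}}$ and $r_{k}=\frac{\bar{\bf w}^{\sf H}{\bf A}_{k}\bar{\bf w}}{\bar{\bf w}^{\sf H}{\bf B}_{k}\bar{\bf w}}$: the private block $\frac{1}{\ln 2}\sum_{k}\ln r_{k}$ contributes $\frac{1}{\ln 2}\sum_{k}\big(\frac{{\bf A}_{k}\bar{\bf w}}{\bar{\bf w}^{\sf H}{\bf A}_{k}\bar{\bf w}}-\frac{{\bf B}_{k}\bar{\bf w}}{\bar{\bf w}^{\sf H}{\bf B}_{k}\bar{\bf w}}\big)$; the $\tau$-only part vanishes under $\partial/\partial\bar{\bf w}^{*}$; and the antenna-power term $-\frac{\mu}{(\ln 2)\log_{2}(1+\rho^{-1})}\sum_{i}P_{{\sf ant},i}\ln(\bar{\bf w}^{\sf H}{\bf E}_{i}\bar{\bf w})$ contributes $-\frac{\mu}{(\ln 2)\log_{2}(1+\rho^{-1})}\sum_{i}P_{{\sf ant},i}\frac{{\bf E}_{i}\bar{\bf w}}{\bar{\bf w}^{\sf H}{\bf E}_{i}\bar{\bf w}}$. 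The delicate step is the common block $-a\ln\!\big(\sum_{k}r_{{\sf c},k}^{-1/(a\ln 2)}\big)$, where a chain rule through the outer $-a\ln(\cdot)$ and the inner power $r_{{\sf c},k}^{-1/(a\ln 2)}$ produces the softmax-type weights $r_{{\sf c},k}^{-1/(a\ln 2)}/\sum_{l}r_{{\sf c},l}^{-1/(a\ln 2)}$, while the inner factor $-1/(a\ln 2)$ cancels the outer $-a$ and leaves exactly $\frac{1}{\ln 2}\sum_{k}\frac{r_{{\sf c},k}^{-1/(a\ln 2)}}{\sum_{l}r_{{\sf c},l}^{-1/(a\ln 2)}}\big(\frac{{\bf A}_{{\sf c},k}\bar{\bf w}}{\bar{\bf w}^{\sf H}{\bf A}_{{\sf c},k}\bar{\bf w}}-\frac{{\bf B}_{{\sf c},k}\bar{\bf w}}{\bar{\bf w}^{\sf H}{\bf B}_{{\sf c},k}\bar{\bf w}}\big)$. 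Carrying these weights and the single overall constant through correctly is where I expect the main obstacle to lie; everything else is routine algebra.

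Finally I would set the sum of these gradients to ${\bf 0}$, multiply by $\ln 2$ to clear the common $\frac{1}{\ln 2}$, and collect the terms carrying ${\bf A}_{{\sf c},k}$ or ${\bf A}_{k}$ into $\widetilde{\bf A}(\bar{\bf w})\,\bar{\bf w}$ and those carrying ${\bf B}_{{\sf c},k}$, ${\bf B}_{k}$, or ${\bf E}_{i}$ into $\widetilde{\bf B}(\bar{\bf w})\,\bar{\bf w}$; the $\mu$ antenna term lands on the ${\bf B}$ side because it entered $L$ with a minus sign. This gives $\widetilde{\bf A}(\bar{\bf w})\,\bar{\bf w}=\widetilde{\bf B}(\bar{\bf w})\,\bar{\bf w}$, and comparison with \eqref{eq:lem_A_kkt_Q} and \eqref{eq:lem_B_kkt_Q} shows ${\bf A}_{\sf KKT}=\lambda_{\sf num}\widetilde{\bf A}$ and ${\bf B}_{\sf KKT}=\lambda_{\sf den}\widetilde{\bf B}$. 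Using $\lambda_{\sf num}=\lambda\,\lambda_{\sf den}$ then yields ${\bf A}_{\sf KKT}\bar{\bf w}=\lambda_{\sf num}\widetilde{\bf B}\bar{\bf w}=\lambda\,{\bf B}_{\sf KKT}\bar{\bf w}$, and left-multiplying by ${\bf B}_{\sf KKT}^{-1}$ gives \eqref{functional eigenvalue problem}. Each implication above is reversible -- in particular, multiplying the eigenvalue equation on the left by ${\bf B}_{\sf KKT}$ and dividing by $\lambda_{\sf num}\neq 0$ returns the stationarity condition -- so the eigenvalue relation and first-order optimality are equivalent, which in particular establishes the lemma; I would close by remarking that the freedom in the pair $(\lambda_{\sf num},\lambda_{\sf den})$ does not affect the statement and is exploited only afterwards to make the GPI fixed point coincide with the dominant eigenvector of ${\bf B}_{\sf KKT}^{-1}{\bf A}_{\sf KKT}$.
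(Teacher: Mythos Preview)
Your proposal is correct and follows essentially the same approach as the paper: differentiate $L=L_1+L_2+L_3$ term by term with respect to $\bar{\bf w}^{\sf H}$, set the sum to zero, then group the ${\bf A}$-type and ${\bf B}$-type contributions and scale by $\lambda_{\sf num},\lambda_{\sf den}$ to obtain the generalized eigenvalue form. Your treatment is in fact somewhat more explicit than the paper's on two points --- you spell out the chain rule through the LogSumExp carefully, and you give a substantive argument for the invertibility of ${\bf B}_{\sf KKT}$ (the paper merely asserts it because ${\bf B}_{{\sf c},k},{\bf B}_k$ are Hermitian block diagonal) --- but the structure and all computations coincide.
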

\begin{proof}
    We find the stationarity condition by setting the partial derivative of the objective function in \eqref{lagrangian} to zero.
    For convenience, let $L_1({\bar{{\bf w}}}; \tau)$, $L_2({\bar{{\bf w}}}; \tau)$, and $L_3({\bar{{\bf w}}}; \tau, \mu)$ be
    \begin{align}
        \label{L1}
        & L_1({\bar{{\bf w}}}; \tau) 
        = -a \ln \left( \sum_{k = 1}^{K} \left( \frac{\bar {\bf{w}}^{\sf H} {\bf{A}}_{{\sf c},k} \bar{\bf{w}}}{\bar{\bf{w}}^{\sf H} {\bf{B}}_{{\sf c},k} \bar{\bf{w}} } \right)^{-\frac {1} {a\ln2}} \right), \\
        \label{L2}
        & L_2({\bar{{\bf w}}}; \tau) 
        = \frac{1}{\ln{2}}{\sum_{k=1}^{K}}\ln \left( \frac{\bar {\bf{w}}^{\sf H} {\bf{A}}_{k}\bar{\bf{w}}}{\bar{\bf{w}}^{\sf H} {\bf{B}}_{k} \bar{\bf{w}} } \right), \\
        \label{L3}
        & L_3({\bar{{\bf w}}}; \tau, \mu) = 
        - \mu \left(
            \frac{\tau}{\varsigma} P
            \! + \! 
            P_{\sf LO} 
            \! + \! \! 
            {\sum_{i=1}^{N}} \! \ln \! \left( \! {\bar{\bf w}}^{\sf H} {\bf E}_i {\bar{\bf w}} \! \right)^{\! \frac{P_{{\sf ant}, i}}{(\ln{ \!2}) \! \log_2 \! (1+\rho^{-1})}}
            \! \! - \! P_{\sf tot}
        \right),
    \end{align}
    respectively. Then, the partial derivatives of $L_1({\bar{{\bf w}}}; \tau)$, $L_2({\bar{{\bf w}}}; \tau)$, and $L_3({\bar{{\bf w}}}; \tau, \mu)$ can be derived as
    \begin{align}
        & \frac{\partial L_1({\bar{{\bf w}}}; \tau)}{\partial {\bar{\bf w}}^{\sf H}} 
        = \frac{ \sum\limits_{k = 1}^{K} \left( \frac{\bar {\bf{w}}^{\sf H} {\bf{A}}_{{\sf c},k} \bar{\bf{w}}}{\bar{\bf{w}}^{\sf H} {\bf{B}}_{{\sf c},k} \bar{\bf{w}} } \right)^{-\frac {1} {a\ln2}} \left( \frac{{\bf{A}}_{{\sf c},k} \bar{\bf{w}}}{\bar{\bf{w}}^{\sf H} {\bf{A}}_{{\sf c},k} \bar{\bf{w}} } - \frac{{\bf{B}}_{{\sf c},k} \bar{\bf{w}}}{\bar{\bf{w}}^{\sf H} {\bf{B}}_{{\sf c},k} \bar{\bf{w}} } \right) }{(\ln{2}) \sum\limits_{l = 1}^{K} \left( \frac{\bar {\bf{w}}^{\sf H} {\bf{A}}_{{\sf c},l} \bar{\bf{w}}}{\bar{\bf{w}}^{\sf H} {\bf{B}}_{{\sf c},l} \bar{\bf{w}} } \right)^{-\frac {1} {a\ln2}}}, \\
        & \frac{\partial L_2({\bar{{\bf w}}}; \tau)}{\partial {\bar{\bf w}}^{\sf H}} 
        = \frac{1}{\ln{2}} \sum\limits_{k = 1}^{K} \left( \frac{{\bf{A}}_k \bar{\bf{w}}}{\bar{\bf{w}}^{\sf H} {\bf{A}}_k \bar{\bf{w}} } - \frac{{\bf{B}}_k \bar{\bf{w}}}{\bar{\bf{w}}^{\sf H} {\bf{B}}_k \bar{\bf{w}} } \right), \\
        & \frac{\partial L_3({\bar{{\bf w}}}; \tau, \mu)}{\partial {\bar{\bf w}}^{\sf H}} 
        = \frac{-\mu}{(\ln{2}) \log_2(1+\rho^{-1})} \sum\limits_{i = 1}^{N}P_{{\sf ant}, i}\frac{{\bf{E}}_i \bar{\bf{w}}}{\bar{\bf{w}}^{\sf H} {\bf{E}}_i \bar{\bf{w}} }.
    \end{align}
    Subsequently, the first-order optimality condition is given as
    \begin{align}
        \label{kkt condition}
        \frac{\partial L({\bar{{\bf w}}}; \tau, \mu)}{\partial {\bar{\bf w}}^{\sf H}} 
        = \frac{\partial L_1({\bar{{\bf w}}}; \tau)}{\partial {\bar{\bf w}}^{\sf H}} 
        + \frac{\partial L_2({\bar{{\bf w}}}; \tau)}{\partial {\bar{\bf w}}^{\sf H}} 
        + \frac{\partial L_3({\bar{{\bf w}}}; \tau, \mu)}{\partial {\bar{\bf w}}^{\sf H}} = {\bf{0}}.
    \end{align}
    This can be rearranged into the form shown below:
    \begin{align}
        {\bf{A}}_{\sf KKT}(\bar {\bf{w}}; \tau, \mu) \bar {\bf{w}} = \lambda({\bar{{\bf w}}}; \tau, \mu) {\bf{B}}_{\sf KKT} (\bar {\bf{w}}; \tau, \mu) \bar {\bf{w}}.
    \end{align}
    Here, ${\bf{B}}_{\sf KKT}(\bar {\bf{w}}; \tau, \mu)$ is guaranteed to be invertible because ${\bf B}_{{\sf c},k}$ and ${\bf B}_{k}$ are Hermitian block diagonal matrices.
    As a result, we arrive at the condition in \eqref{functional eigenvalue problem}, which completes the proof.
\end{proof}

We regard the condition in \eqref{functional eigenvalue problem} as a functional eigenvalue problem.
In this regard, ${\bar{\bf w}}$ and $\lambda({\bar{{\bf w}}}; \tau, \mu)$ can be interpreted as an eigenvector and its corresponding eigenvalue of ${\bf{B}}_{\sf KKT}^{-1} (\bar {\bf{w}}; \tau, \mu){\bf{A}}_{\sf KKT}(\bar {\bf{w}}; \tau, \mu)$, respectively.
Then, finding the stationary solution of the problem can be viewed as a nonlinear eigenvalue problem with eigenvector dependency (NEPv) \cite{cai:siam:18}.
Noting the relationship of $\lambda(\bar {\bf w};\tau,\mu)$ to the Lagrangian function in \eqref{lagrangian}, finding the leading eigenvector in \eqref{functional eigenvalue problem} corresponds to identifying one of the stationary points that maximizes the Lagrangian function.
To find the principal eigenvector, we adopt an efficient GPI-based algorithm \cite{choi2019GPI}.

Let $\bar {\bf{w}}^{(t)}$ be the vectorized normalized precoder at iteration $t$.
Using \eqref{eq:lem_A_kkt_Q} and \eqref{eq:lem_B_kkt_Q}, we form the matrices  ${\bf{A}}_{\sf KKT} (\bar {\bf{w}}^{(t)}; \tau, \mu)$ and ${\bf{B}}_{\sf KKT} (\bar {\bf{w}}^{(t)}; \tau, \mu)$ and update $\bar {\bf{w}}^{(t+1)}$ using the following:
\begin{align}
    \label{eq:PowerIteration}
    \bar {\bf{w}}^{(t+1)} = \frac{
    {{\bf{B}}^{-1}_{\sf KKT} (\bar {\bf{w}}^{(t)}; \tau, \mu) {\bf{A}}_{\sf KKT} (\bar {\bf{w}}^{(t)}; \tau, \mu) \bar {\bf{w}}^{(t)}}
    }{
    \left\lVert {{\bf{B}}^{-1}_{\sf KKT} (\bar {\bf{w}}^{(t)}; \tau, \mu) {\bf{A}}_{\sf KKT} (\bar {\bf{w}}^{(t)}; \tau, \mu) \bar {\bf{w}}^{(t)}} \right\rVert
    }.
\end{align}
We remark that the normalization in \eqref{eq:PowerIteration} guarantees the feasibility of $\bar {\bf w}$, i.e., $\|\bar {\bf w} \|^2= 1$.
The algorithm continues this process until the convergence condition $\lVert {\bf{w}}^{(t+1)} - \bar {\bf{w}}^{(t)} \rVert \leq \epsilon$ is met, where $\epsilon$ is a positive threshold value, or until a predetermined maximum number of iterations $t_{\rm max}$ is reached. Algorithm~\ref{alg:main_QRS} outlines the steps of this procedure.

\begin{algorithm} [t]
\caption{Optimization of ${\bf W}$} \label{alg:main_QRS} 
{\bf{initialize}}: $\bar {\bf{w}}^{(0)}$\\
Set the iteration count $t = 0$\\
\While {$\left\lVert \bar{\bf{w}}^{(t+1)} - \bar{\bf{w}}^{(t)} \right\rVert > \epsilon$ $\it{\&}$ $t < t_{\rm max}$}{
Build matrix $ {\bf{A}}_{\sf KKT} (\bar {\bf{w}}^{(t)}; \tau, \mu)$ in \eqref{eq:lem_A_kkt_Q}\\
Build matrix $ {\bf{B}}_{\sf KKT} (\bar {\bf{w}}^{(t)}; \tau, \mu)$ in \eqref{eq:lem_B_kkt_Q} \\
Compute $\bar {\bf{w}}^{(t+1)} = $ ${{\bf{B}}^{-1}_{\sf KKT} (\bar {\bf{w}}^{(t)}; \tau, \mu) {\bf{A}}_{\sf KKT} (\bar {\bf{w}}^{(t)}; \tau, \mu) \bar {\bf{w}}^{(t)}}$ \\
Normalize $\bar {\bf{w}}^{(t+1)} \leftarrow  \bar {\bf{w}}^{(t+1)} /   \left\lVert \bar{\bf{w}}^{(t+1)} \right\rVert$\\
 $t \leftarrow t+1$}

\Return{\ }{$\bar{\bf w}^{(t)}$}
\end{algorithm}

\subsection{Joint  Precoding, Antenna Selection, and Power Control}

\begin{figure*}[b]
    \centering
    \rule{\textwidth}{0.7pt}
    \vspace{-1 em}
\begin{align}
    \label{eq:Xi_1}
    \Xi_1
    &= {\sum_{i = 1}^{K}} {\bf{w}}_{i}^{\sf H} \left({\bf{\Phi}}_{{\alpha }_{\sf DAC}}^{1/2}\right)^{\sf H} \left( {\hat{\bf h}}_k{\hat{\bf h}}^{\sf H}_k + {\bf{R}}_{\Tilde{{\bf h}}_k} \right) {\bf{\Phi}}_{{\alpha }_{\sf DAC}}^{1/2} {\bf{w}}_{i}
    + {\bf{w}}_{0}^{\sf H} \left({\bf{\Phi}}_{{\alpha }_{\sf DAC}}^{1/2}\right)^{\sf H} {\bf{R}}_{\Tilde{{\bf h}}_k} {\bf{\Phi}}_{{\alpha }_{\sf DAC}}^{1/2} {\bf{w}}_{0}
    + \sum\limits_{i=0}^{K}{\bf w}_i^{\sf H}{\bf{\Phi}}_{{\beta }_{\sf DAC}}{\rm diag}\left( {\hat{\bf h}}_k{\hat{\bf h}_k}^{\sf H} + {\bf{R}}_{\Tilde{{\bf h}}_k} \right){\bf w}_i
    + \frac{{\sigma}^2}{\tau P}, \\
    \label{eq:Xi_2}
    \Xi_2
    &= \! \! \sum_{i = 1, i \neq k}^K \! \!
    {\bf{w}}_{i}^{\sf H} \left({\bf{\Phi}}_{{\alpha }_{\sf DAC}}^{1/2}\right)^{\! \sf H} \left( {\hat{\bf h}}_k{\hat{\bf h}}^{\sf H}_k + {\bf{R}}_{\Tilde{{\bf h}}_k} \right) {\bf{\Phi}}_{{\alpha }_{\sf DAC}}^{1/2} {\bf{w}}_{i}
    + {\bf{w}}_{k}^{\sf H} \left({\bf{\Phi}}_{{\alpha }_{\sf DAC}}^{1/2}\right)^{\! \sf H} {\bf{R}}_{\Tilde{{\bf h}}_k} {\bf{\Phi}}_{{\alpha }_{\sf DAC}}^{1/2} {\bf{w}}_{k}
    + \sum\limits_{i=0}^{K}{\bf w}_i^{\sf H}{\bf{\Phi}}_{{\beta }_{\sf DAC}}{\rm diag}\left( {\hat{\bf h}}_k{\hat{\bf h}_k}^{\sf H} + {\bf{R}}_{\Tilde{{\bf h}}_k} \right){\bf w}_i+\frac{{\sigma}^2}{\tau P}.
\end{align}
\end{figure*}

In this subsection, we show how to optimize the power scaling parameter $\tau$,  the Lagrange multiplier $\mu$, and the antenna selection, thereby proposing the joint  precoding, antenna selection, and power control algorithm.
Since $\tau$ is a scalar variable, we use a computationally efficient gradient descent algorithm for finding the optimal $\tau$ given ${\bar{\bf w}}$ and $\mu$.
We use the same objective function $L({\bar{{\bf w}}}, \tau, \mu)$ in \eqref{lagrangian} and calculate the partial derivative with respect to $\tau$.
Let $\delta_{\sf GD}$ denote the step size of the gradient descent algorithm.
Then, the gradient descent update is given as
\begin{align}
    \label{gradient descent}
    \tau^{(t+1)} = \tau^{(t)} + \delta_{\sf GD} \frac{\partial L(\tau; {\bar{{\bf w}}}, \mu)}{\partial \tau^{(t)}}.
\end{align}
Before deriving the partial derivative of the objective function with respect to $\tau$, we define $\Xi_1$ \eqref{eq:Xi_1} and $\Xi_2$ \eqref{eq:Xi_2} shown at the bottom of the next page for the simplicity in presentation.
Using $\Xi_1$ and $\Xi_2$, we rewrite the objective function terms $L_1({\bar{{\bf w}}}, \tau)$, $L_2({\bar{{\bf w}}}, \tau)$, and $L_3({\bar{{\bf w}}}, \tau, \mu)$ in \eqref{L1}, \eqref{L2}, and \eqref{L3}, respectively, as follows:
\begin{align}
    & L_1(\tau; {\bar{{\bf w}}}) = -a \ln \left( \sum_{k = 1}^{K} \left( 
    1 + \frac{|{\bf \hat{h}}_k^{\sf H}{\bf{\Phi}}_{{\alpha }_{\sf DAC}}^{1/2} {\bf{w}}_{0}|^2} {\Xi_1}
    \right)^{-\frac {1} {a\ln2}} \right), \\
    & L_2(\tau; {\bar{{\bf w}}}) = \frac{1}{\ln{2}} \sum_{k = 1}^{K} \ln \left( 
    1 + \frac{|{\bf \hat{h}}_k^{\sf H}{\bf{\Phi}}_{{\alpha }_{\sf DAC}}^{1/2}{\bf{w}}_{k}|^2} {\Xi_2}
    \right), \\
    & L_3(\tau; {\bar{{\bf w}}}, \mu) = - \mu \left( \frac{\tau}{\varsigma}P + \Tilde{P}_{\sf cir}({\bar{\bf w}}) - P_{\sf tot} 
    \right).
\end{align}
We then derive the partial derivatives of $L_1(\tau; {\bar{{\bf w}}})$, $L_2(\tau; {\bar{{\bf w}}})$, and $L_3(\tau; {\bar{{\bf w}}}, \mu)$ with respect to $\tau$:
\begin{align}
    \label{L1'}
    &\frac{\partial L_1(\tau; {\bar{{\bf w}}})}{\partial \tau} \\
    \nonumber
    & = \frac{\sigma^2}{\tau^2 P \ln{2}} \frac{     \sum\limits_{k = 1}^{K} \left( 
    1 + \frac{|{\bf \hat{h}}_k^{\sf H}{\bf{\Phi}}_{{\alpha }_{\sf DAC}}^{1/2} {\bf{w}}_{0}|^2} {\Xi_1}
    \right)^{-\frac {1} {a\ln2}-1} \frac{|{\bf \hat{h}}_k^{\sf H}{\bf{\Phi}}_{{\alpha }_{\sf DAC}}^{1/2} {\bf{w}}_{0}|^2}{\Xi_1^2}    }{\sum\limits_{k = 1}^{K} \left( 
    1 + \frac{|{\bf \hat{h}}_k^{\sf H}{\bf{\Phi}}_{{\alpha }_{\sf DAC}}^{1/2} {\bf{w}}_{0}|^2} {\Xi_1}
    \right)^{-\frac {1} {a\ln2}}}, 
    \\
    \label{L2'}
    & \frac{\partial L_2(\tau; {\bar{{\bf w}}})}{\partial \tau} = \frac{\sigma^2}{\tau^2 P \ln{2}} \sum\limits_{k = 1}^{K} \frac{|{\bf \hat{h}}_k^{\sf H}{\bf{\Phi}}_{{\alpha }_{\sf DAC}}^{1/2} {\bf{w}}_{k}|^2}{\Xi_2 \left( \Xi_2 + |{\bf \hat{h}}_k^{\sf H}{\bf{\Phi}}_{{\alpha }_{\sf DAC}}^{1/2} {\bf{w}}_{k}|^2 \right)}, \\
    \label{L3'}
    & \frac{\partial L_3(\tau; {\bar{{\bf w}}}, \mu)}{\partial \tau} = -\frac{\mu}{\varsigma}P.
\end{align}
Finally, using \eqref{L1'}, \eqref{L2'}, and \eqref{L3'}, $\frac{\partial L(\tau; {\bar{{\bf w}}}, \mu)}{\partial \tau}$ is calculated as
\begin{align}
    \label{gradient}
    \frac{\partial L(\tau; {\bar{{\bf w}}}, \mu)}{\partial \tau} =  \frac{\partial L_1(\tau; {\bar{{\bf w}}})}{\partial \tau} \!+ \!\frac{\partial L_2(\tau; {\bar{{\bf w}}})}{\partial \tau} \!+\! \frac{\partial L_3(\tau; {\bar{{\bf w}}}, \mu)}{\partial \tau}.
\end{align}

For each gradient descent update, we use a backtracking line search method \cite{armijo1966minimization} in our simulations to decide the proper value for $\delta_{\sf GD}$. 
We also perform thresholding for each iterative update of $\tau$, given as $\tau = \max\left\{0, \min\left\{1, \tau\right\}\right\}$, to guarantee the feasibility of $\tau$, i.e.,  $\tau \in (0,1]$.

Given the power scaling parameter $\tau$ and the normalized precoder $\bar{\bf w}$, we update $\mu$ using the bisection method.
At iteration $t_{\rm \mu}$, if the power constraint is infeasible, i.e., $\frac{\tau}{\varsigma}P + \Tilde{P}_{\sf cir}({\bar{\bf w}}) > P_{\sf tot}$, $\mu^{(t_{\rm \mu})}$ is increased by step size $\delta_{\sf bm}^{( t_{\rm \mu} )}$ to put more emphasis on the power constraint term in \eqref{eq:constraint_new_new_Blk}. 
On the contrary, if the power constraint is feasible, i.e., $\frac{\tau}{\varsigma}P + \Tilde{P}_{\sf cir}({\bar{\bf w}}) \leq P_{\sf tot}$, $\mu^{(t_{\rm \mu})}$ is decreased by the same step size. 
After each iteration of this bisection algorithm, the step size $\delta_{\sf bm}^{ ( t_{\rm \mu} )}$ is reduced by half, i.e., $\delta_{\sf bm}^{( t_{\rm \mu}+1 )} = \delta_{\sf bm}^{( t_{\rm \mu} )} / 2$, thus bisecting the search interval.

Finally, we propose a quantization-aware power-constrained antenna selection (Q-PCAS) algorithm as described in Algorithm~\ref{alg:main_EEM}.
The algorithm begins by obtaining $\tau$ using the gradient descent method based on the current values of $\bar {\bf{w}}^{(t_{\rm {\bf F}})}$ and $\mu^{(t_{\mu})}$.
Next, $\bar {\bf{w}}^{(t_{\rm {\bf F}}+1)}$ is calculated using Algorithm~\ref{alg:main_QRS} with $\bar {\bf{w}}^{(t_{\rm {\bf F}})}$, $\mu^{(t_{\mu})}$, and the recently computed $\tau^{(t_{\tau})}$ as inputs.
Once $\bar{\bf{w}}$ and $\tau$ converge, the algorithm checks which antennas have noticeable effective gain.
Let $\Tilde{\bf w}_i$ denote the $i$th row of $\bf W$, and $\hat{\bf w}_i$ denote $\hat{\bf w}_i = \Tilde{\bf w}_i / \max_j \lVert \Tilde{\bf w}_j / \sqrt{\alpha_{{\sf DAC}, j}} \rVert.$
Then, only the antennas which satisfy $\lVert \hat{\bf w}_i / \sqrt{\alpha_{{\sf DAC}, i}} \rVert^2 \geq \epsilon_{\sf as}$ are selected to be activated, where $\epsilon_{\sf as} > 0$ is a sufficiently small value.
The other antennas are turned off accordingly, i.e., $\Tilde{{\bf f}}_i^{(t_{\rm {\bf F}})} = {\bf 0}_{1 \times (K+1)}$ if  $\lVert \hat{\bf w}_i / \sqrt{\alpha_{{\sf DAC}, i}} \rVert^2 < \epsilon_{\sf as}$,
where $\Tilde{{\bf f}}_i^{(t_{\rm {\bf F}})}$ denotes the $i$th row of ${\bf F}^{(t_{\rm {\bf F}})}$.
Note that this selection strategy works because the rows of $\bf W$ are jointly designed with each other by using the antenna power approximation in \eqref{eq:antenna power3}, allowing the designed $\bf W$ to incorporate relative gains across the antennas. 
Lastly, $\mu^{(t_{\mu})}$ is optimized according to the previously mentioned bisection method in the outer loop. 
After this entire procedure is repeated for $t_{\rm \mu, max}$ iterations, we recalculate the final value of $\tau$ such that the transmit power is maximized while satisfying the power constraints for the set of selected antennas.
Then, a readjustment of the normalized precoder $\bar {\bf w}$ is performed again for the recalculated $\tau$ with $\mu =0 $, using Algorithm~\ref{alg:main_QRS}.
Ideally, the solution we obtain using Algorithm 2 is a stationary point that satisfies the KKT conditions.

\begin{algorithm} [!t]
\caption{Q-PCAS} \label{alg:main_EEM} 
{\bf{initialize}}: $\bar {\bf{w}}^{(0)}, \tau^{(0)}, \text{and} \; \mu^{(0)}$\\
Set the iteration count $t_{\rm \mu} = 0$\\
\While {$t_{\rm \mu} < t_{\rm \mu, max}$}{
    Set the iteration count $t_{\rm {\bf F}} = 0$\\
    \While{$\left\| {\bf F}^{(t_{\rm {\bf F}}+1)} \! - \! {\bf F}^{(t_{\rm {\bf F}})} \right\|_F \! \mathbin{/} \! \left\| {\bf F}^{(t_{\rm {\bf F}})} \right\|_F > \epsilon_{\rm {\bf F}}$
    $\it{\&}$ $t_{\rm {\bf F}} < t_{\rm {\bf F}, max}$} {
        Set the iteration count $t_{\tau} = 0$\\
        \While {$\left| \tau^{(t_{\tau}+1)} - \tau^{(t_{\tau})} \right| \mathbin{/} \left| \tau^{(t_{\tau})} \right| > \epsilon_{\rm \tau}$ $\it{\&}$ $t_{\tau} < t_{\tau, \rm max}$} {
            Set $\frac{\partial L(\tau; {\bar{{\bf w}}}, \mu)}{\partial \tau}$ according to \eqref{gradient} \\
            Update $\tau^{(t_{\tau})}$ according to \eqref{gradient descent} \\
            $\tau^{(t_{\tau})} = \max\left\{0, \min\left\{\tau^{(t_{\tau})}, 1 \right\}\right\}$ \\
            $t_{\tau} \leftarrow t_{\tau} + 1$
        }
        $\bar {\bf{w}}^{(t_{\rm {\bf F}})} = \text{Alg.~\ref{alg:main_QRS}}\left( \bar {\bf{w}}^{(t_{\rm {\bf F}})}, \tau^{(t_{\tau})}, \mu^{(t_{\rm \mu})} \right)$ \\
        Compute ${\bf F}^{(t_{\rm {\bf F}})} = \sqrt{\tau^{(t_{\tau})}} {{\bf{\Phi}}_{{\alpha }_{\sf DAC}}^{-1/2}} \left[{\bf w}_0,{\bf w}_1, \cdots, {\bf w}_K\right]$ \\
        $t_{\rm {\bf F}} \leftarrow t_{\rm {\bf F}} + 1$
    }
    Compute $\hat{\bf W} = \Tilde{\bf W} \mathbin{/} \max_j \left\lVert \Tilde{\bf w}_j / \sqrt{\alpha_{{\sf DAC}, j}} \right\rVert$ \\
    Set $\Tilde{{\bf f}}_i^{(t_{\rm {\bf F}})} = {\bf 0}_{1 \times (K+1)}$ if $\left\lVert \hat{\bf w}_i / \sqrt{\alpha_{{\sf DAC}, i}} \right\rVert^2 < \epsilon_{\sf as}$ for $i = 1, \cdots, N$ \\
    
    Update $\mu^{(t_{\rm \mu})}$ using the bisection method \\
    $t_{\rm \mu} \leftarrow t_{\rm \mu} + 1$
}

Update $\tau = \max \left\{0,  \min \left\{ \frac{\varsigma}{P} \left( P_{\sf tot} - P_{\sf cir} \right), 1 \right\} \right\}$

Compute $\bar {\bf w} = \text{Alg.~\ref{alg:main_QRS}}\left( \bar {\bf{w}}, \tau, \mu =0\right)$ \\
Compute ${\bf F} = \sqrt{\tau} {{\bf{\Phi}}_{{\alpha }_{\sf DAC}}^{-1/2}} \left[{\bf w}_0,{\bf w}_1, \cdots, {\bf w}_K\right]$

\Return{\ }{${\bf F}$}
\end{algorithm}

\subsection{Complexity Analysis}

Let us first analyze the computational complexity of Algorithm~\ref{alg:main_QRS}.
The main computational burden for the algorithm is on the inversion of ${\bf{B}}_{\sf KKT}(\bar {\bf{w}}; \tau, \mu)$.
Since the complexity of the calculation of the inverse of a matrix of size $N \times N$ is $\mathcal{O}(N^3)$, the complexity of computing ${\bf B}_{\sf KKT}^{-1}({\bar{{\bf w}}}; \tau, \mu)$ is $\mathcal{O} (N^3K)$ because it is a block diagonal matrix that comprises of $(K+1)$ sub-matrices of size $N \times N$.
Therefore, if we denote the number of iterations in Algorithm~\ref{alg:main_QRS} as $T_{\sf GPI}$,
the complexity of the algorithm is $\mathcal{O}(T_{\sf GPI} \cdot N^3K).$

Now let us consider Algorithm~\ref{alg:main_EEM}.
For every iteration of $\tau$, the computational complexity is $\mathcal{O} ( {\mathrm {max}} ( N^2K^2,T_{\mathrm{BT}}K ) )$, 
where $T_{\mathrm{BT}}$ denotes the number of backtracking iterations.
Here, $T_{\mathrm{BT}}$ may vary for each gradient descent step.
However, for the ease of analysis, $T_{\mathrm{BT}}$ is assumed to be constant and the same for every gradient descent step.
Then, the complexity for the gradient descent algorithm is $ \mathcal{O} ( T_{\mathrm{GD}} \cdot {\mathrm {max}} ( N^2K^2,T_{\mathrm{BT}}K ) ),$ 
where $T_{\mathrm{GD}}$ denotes the number of gradient descent steps.
Finally, the overall complexity of Algorithm~\ref{alg:main_EEM} is
\begin{align}
    \mathcal{O} (
        t_{\rm \mu, max} T_{\rm {\bf F}} \cdot {\mathrm {max}} (
            T_{\sf GPI} N^3K,
            T_{\mathrm{GD}} \cdot {\mathrm {max}} ( N^2K^2,T_{\mathrm{BT}}K )
        )
    ),
\end{align}
where $T_{\rm {\bf F}}$ denotes the number of iterations of the precoder optimization.
For MIMO systems where $N\gg K$, it is reasonable to assume that $N^2K^2 > T_{\mathrm{BT}}K$ and $T_{\sf GPI} N^3K > T_{\mathrm{GD}} N^2K^2$.
As a result, the overall complexity can be simplified to $\mathcal{O} (t_{\rm \mu, max} T_{\rm {\bf F}} T_{\sf GPI} N^3K)$.

\subsection{Complexity-Reduction Approach}

\begin{figure*}[t]
    \centering
\begin{align}
    \label{eq:D_sub_diag}
    &{\bf D}_{{\sf diag}, k'}^{\sf sub} \\
    & \! \! = \! \nonumber
    \begin{cases}
        \! \sum\limits_{i = 1}^{N} \! c_{1, i} \bigl({\bf I}_{N} \! + \! \rho^{-1} \Tilde{\bf e}_i \Tilde{\bf e}_i^{\sf H}\bigr)
        \! + \! \! \sum\limits_{k = 1}^{K} \! c_{2, k}
        \Bigl(
          \sigma_{\Tilde{{\bf h}}_k}^2 \! {\bf\Phi}_{{\alpha }_{\sf DAC}} \!
          + {\bf\Phi}_{{\beta }_{\sf DAC}} \mathrm{diag}\bigl({\hat{\bf h}}_k{\hat{\bf h}}^{\sf H}_k \! + \! {\bf R}_{\Tilde{{\bf h}}_k} \! \bigr)
          \! + \! {\bf I}_{N} \! \frac{\sigma^2}{\tau P}
        \Bigr)
        \! + \! \! \sum\limits_{k = 1}^{K} \! c_{3, k}
        \Bigl(
          {\bf\Phi}_{{\beta }_{\sf DAC}} \mathrm{diag}\bigl({\hat{\bf h}}_k{\hat{\bf h}}^{\sf H}_k \! + \! {\bf R}_{\Tilde{{\bf h}}_k} \! \bigr)
          \! + \! {\bf I}_{N} \! \frac{\sigma^2}{\tau P}
        \Bigr)
        , & k' = 1, \\
        \sum\limits_{i = 1}^{N} c_{1, i}\bigl({\bf I}_{N} + \rho^{-1} \Tilde{\bf e}_i \Tilde{\bf e}_i^{\sf H}\bigr)
        + \sum\limits_{k = 1}^{K} (c_{2, k} + c_{3, k})
        \Bigl(
          \sigma_{\Tilde{{\bf h}}_k}^2 {\bf\Phi}_{{\alpha }_{\sf DAC}}
          + {\bf\Phi}_{{\beta }_{\sf DAC}} \mathrm{diag}\bigl({\hat{\bf h}}_k{\hat{\bf h}}^{\sf H}_k + {\bf R}_{\Tilde{{\bf h}}_k}\bigr)
          + {\bf I}_{N} \frac{\sigma^2}{\tau P}
        \Bigr)
        , & k' \neq 1.
    \end{cases}
\end{align}
   \vspace{-1 em}
    \\  
   \rule{\textwidth}{0.7pt}
\end{figure*}

Here, we further propose a low-complexity algorithm, denoted as Q-PCAS (Low), by using an iterative approach so that the complexity of computing ${\bf B}_{\sf KKT}^{-1}({\bar{{\bf w}}}; \tau, \mu)$ in \eqref{eq:PowerIteration} is reduced from $\mathcal{O} (N^3 K)$ to $\mathcal{O} (N^2 K)$.
For the sake of presentation, we first simplify the expression for ${\bf B}_{\sf KKT}({\bar{{\bf w}}}; \tau, \mu)$ in \eqref{eq:lem_B_kkt_Q} as
\begin{align}
    {\bf B}_{\sf KKT}({\bar{{\bf w}}}; \tau, \mu) 
    & = \sum\limits_{i = 1}^{N} c_{1, i} {\bf{E}}_i
    + \sum\limits_{k = 1}^{K} c_{2, k}  {\bf{B}}_{{\sf c},k}
    + \sum\limits_{k = 1}^{K} c_{3, k}  {\bf{B}}_k,
\end{align}
where $c_{1, i} $, $c_{2, k} $, and $c_{3, k} $ are interpreted as scalar weights for ${\bf{E}}_i$, ${\bf{B}}_{{\sf c},k}$, and ${\bf{B}}_k$, respectively.
If we assume the channels to be IID, we can approximate the covariance matrix of the channel estimation error in FDD systems \cite{wagner2012large, choi2019GPI} as ${\bf{R}}_{\Tilde{{\bf h}}_k} \approx {\bf{R}}_{\Tilde{{\bf h}}_k}^{\sf IID} = \sigma_{\Tilde{{\bf h}}_k}^2 {\bf I}$, where $\sigma_{\Tilde{{\bf h}}_k}^2 = \rho_k (2 - 2 \sqrt{1-\kappa^2})$. Here, $\rho_k$ is the large-scale fading of user $k$.
Meanwhile, recall that ${\bf B}_{\sf KKT}({\bar{{\bf w}}}; \tau, \mu)$ is a block diagonal matrix that comprises of $(K+1)$ sub-matrices of size $N \times N$.
The $(k')$th sub-matrix of ${\bf B}_{\sf KKT}({\bar{{\bf w}}}; \tau, \mu)$, denoted as ${\bf B}_{{\sf KKT}, k'}^{\sf sub} ({\bar{{\bf w}}}; \tau, \mu)$, can be expressed as
\begin{align}
    \label{eq:sub-matrix_of_Bkkt_approximated}
    &{\bf B}_{{\sf KKT}, k'}^{\sf sub} ({\bar{{\bf w}}}; \tau, \mu) \\
    &\approx 
    \nonumber
    \begin{cases} 
        {\bf D}_{{\sf diag}, k'}^{\sf sub}, & k' = 1, \\
        {\bf D}_{{\sf diag}, k'}^{\sf sub}
        + \sum\limits_{k = 1}^{K} (c_{2, k} + c_{3, k}) \left( {{\bf{\Phi}}_{{\alpha }_{\sf DAC}}^{1/2}}\right)^{\sf H} \! \hat{{\bf h}}_k \hat{{\bf h}}^{\sf H}_k {\bf{\Phi}}_{{\alpha }_{\sf DAC}}^{1/2} \\
        \quad \quad \quad - c_{3, k'-1} \left( {{\bf{\Phi}}_{{\alpha }_{\sf DAC}}^{1/2}}\right)^{\sf H} \! \hat{{\bf h}}_{k'-1} \hat{{\bf h}}^{\sf H}_{k'-1} {\bf{\Phi}}_{{\alpha }_{\sf DAC}}^{1/2}
        , & k' \neq 1,
    \end{cases}
\end{align}
where ${\bf D}_{{\sf diag}, k'}^{\sf sub}$ is a diagonal matrix which is derived in \eqref{eq:D_sub_diag} shown at the top of this page.
The complexity of inverting the first sub-matrix, ${\bf B}_{{\sf KKT}, 1}^{\sf sub} ({\bar{{\bf w}}}; \tau, \mu)$, is $\mathcal{O} (N)$ because it is approximated as ${\bf D}_{{\sf diag}, 1}^{\sf sub}$ which is a diagonal matrix of size $N \times N$.
However, for $k' \neq 1$, ${\bf B}_{{\sf KKT}, k'}^{\sf sub} ({\bar{{\bf w}}}; \tau, \mu)$ is approximated as ${\bf D}_{{\sf diag}, k'}^{\sf sub}$ with the addition of rank-one matrix terms, as can be seen in \eqref{eq:sub-matrix_of_Bkkt_approximated}.

Now, we use the Sherman-Morrison (SM) formula
$( \mathbf{A} + \mathbf{u} \mathbf{v}^{\sf H} )^{-1}
= \mathbf{A}^{-1}
- \frac{\mathbf{A}^{-1} \mathbf{u} \mathbf{v}^{\sf H} \mathbf{A}^{-1}}
       {1 + \mathbf{v}^{\sf H} \mathbf{A}^{-1} \mathbf{u}}$
\cite{sherman1950adjustment} 
to recursively compute $( {\bf B}_{{\sf KKT}, k'}^{\sf sub} ({\bar{{\bf w}}}; \tau, \mu) )^{-1}$ for $k' = 2, \cdots, K+1$.
For $k' \neq 1$ and $l = 1, \cdots, K$, we define an auxiliary matrix ${\bf Z}_{\sf SM}^{(l)}$:
\vspace{-0.5em}
\begin{align}
    {\bf Z}_{\sf SM}^{(l)} = {\bf D}_{{\sf diag}, k'}^{\sf sub} + \sum_{k = 1}^{l} (c_{2, k} + c_{3, k}) \mathbf{u}_k \mathbf{u}_k^{\sf H},
\end{align}
where $\mathbf{u}_k = ( {{\bf{\Phi}}_{{\alpha }_{\sf DAC}}^{1/2}} )^{\sf H}  \hat{{\bf h}}_k$.
We note that ${\bf Z}_{\sf SM}^{(l)}$ does not depend on $k'$.
Using the SM formula, $( {\bf Z}_{\sf SM}^{(K)} )^{-1}$ can be recursively computed as
\begin{align}
    &\left( {\bf Z}_{\sf SM}^{(k)} \right)^{-1}
    = \left( {\bf Z}_{\sf SM}^{(k-1)} + (c_{2, k} + c_{3, k}) \mathbf{u}_k \mathbf{u}_k^{\sf H} \right)^{-1} \\
    \nonumber
    &= \left( {\bf Z}_{\sf SM}^{(k-1)} \right)^{-1} 
        - \frac{ (c_{2, k} + c_{3, k}) \left( {\bf Z}_{\sf SM}^{(k-1)} \right)^{-1} \mathbf{u}_k \mathbf{u}_k^{\sf H}  \left( {\bf Z}_{\sf SM}^{(k-1)} \right)^{-1} }{ 1 + (c_{2, k} + c_{3, k}) \mathbf{u}_k^{\sf H}  \left( {\bf Z}_{\sf SM}^{(k-1)} \right)^{-1}  \mathbf{u}_k  },
\end{align}
where ${\bf Z}_{\sf SM}^{(0)} = {\bf D}_{{\sf diag}, k'}^{\sf sub}$.
We remark that the complexity of recursively computing $( {\bf Z}_{\sf SM}^{(K)} )^{-1}$ is $\mathcal{O} (N^2 K)$.
Finally, $( {\bf B}_{{\sf KKT}, k'}^{\sf sub} ({\bar{{\bf w}}}; \tau, \mu) )^{-1}$ for $k' = 2, \cdots \!, K+1$ can be computed as
\begin{align}
    \nonumber
    & \left( {\bf B}_{{\sf KKT}, k'}^{\sf sub} ({\bar{{\bf w}}}; \tau, \mu) \right)^{-1}
    = \left(  {\bf Z}_{\sf SM}^{(K)}  
                - c_{3, k'-1} \mathbf{u}_{k'-1} \mathbf{u}_{k'-1}^{\sf H}
        \right)^{-1} \\
    & =  \left( {\bf Z}_{\sf SM}^{(K)} \right)^{-1}
        + \frac{ c_{3, k'-1} \left( {\bf Z}_{\sf SM}^{(K)} \right)^{-1} \mathbf{u}_{k'-1} \mathbf{u}_{k'-1}^{\sf H} \left( {\bf Z}_{\sf SM}^{(K)} \right)^{-1} }{ 1 - c_{3, k'-1} \mathbf{u}_{k'-1}^{\sf H} \left( {\bf Z}_{\sf SM}^{(K)} \right)^{-1} \mathbf{u}_{k'-1} }.
\end{align}
Since $( {\bf Z}_{\sf SM}^{(K)} )^{-1}$ is already computed, the additional complexity of computing $( {\bf B}_{{\sf KKT}, k'}^{\sf sub} ({\bar{{\bf w}}}; \tau, \mu) )^{-1}$ is $\mathcal{O} (N^2)$.
Hence, by using the SM formula, we can drastically reduce the complexity of computing $\left( {\bf B}_{\sf KKT}({\bar{{\bf w}}}; \tau, \mu) \right)^{-1}$ from $\mathcal{O} (N^3 K)$ to $\mathcal{O} (N^2 K)$. 
Accordingly, the complexity of our low-complexity algorithm is reduced from $\mathcal{O} (t_{\rm \mu, max} T_{\rm {\bf F}} T_{\sf GPI} N^3K)$ to $\mathcal{O} (t_{\rm \mu, max} T_{\rm {\bf F}} T_{\sf GPI} N^2K)$.

\section{Numerical Results}

In our simulations, we utilize the one-ring model \cite{adhi:tit:13} to generate the channel vector of user $k$: $ {\bf h}_k = \sqrt{\rho_k}{\bf g}_k.$
Here, $\rho_k$ and ${\bf g}_k$ denote the large and small-scale fading of user $k$, respectively.
The pathloss $\rho_k$ is generated using the log-distance pathloss model described in \cite{erceg1999empirically}.
We also set $\kappa = 0.4$ unless specified otherwise.
This is so that the estimation error power is about $16\%$ of the channel power.
The rest of the system parameters are set as follows: a cell radius of 1 km, a minimum distance of 100 m between the BS and users, a pathloss exponent of 4, a carrier frequency of 2.4 GHz with a bandwidth of 150 MHz (passband), a lognormal shadowing variance of 8.7 dB, and a noise figure of 5 dB.
The angle-of-departure (AoD) of user $k$, denoted as $\theta_k$, is randomly distributed such that the maximum AoD difference of users is $0.1$ radians.
Meanwhile, we set
$P_{\sf LO} = 22.5$ mW,
$P_{\sf LP} = 14$ mW,
$P_{\sf M} = 0.3$ mW,
$P_{\sf H} = 3$ mW,
$\varsigma = 0.27$,
and the sampling rate as $f_s = 150$ MHz.
Unless specified otherwise, we use $b_{{\sf DAC}, i} \in \{4, 8, 12, 16\}$ quantization bits, each of which is equally assigned to $N/4$ antennas.
Lastly, the parameters used in Algorithm \ref{alg:main_QRS} and \ref{alg:main_EEM} are set as follows: $\epsilon = 0.01$, $t_{\rm max} = 20$, $\rho = 10^{-12}$, $t_{\rm \mu, max} = 20$, $\epsilon_{\rm {\bf F}} = 0.01$, $t_{\rm {\bf F}, max} = 20$, $\epsilon_{\rm \tau} = 0.001$, $t_{\rm \tau, max} = 20$, and $\epsilon_{\sf as} = 0.1$, unless mentioned otherwise.
In addition, $\tau$ and $\mu$ are initialized as $\tau = 1$ and $\mu = 0$, and the precoder is initialized using the Q-RZF algorithm introduced below.
The initial step sizes for the backtracking line search algorithm and the bisection algorithm are set as $\delta_{\sf GD} = 1$ and $\delta_{\sf bm} = 1$, respectively, unless mentioned otherwise.
We remark that the converged value of $\mu$ has always been between $0$ and $\delta_{\sf bm}$ in our simulations.
As for $a$ in \eqref{eq:approxcom}, its values were chosen large enough for Algorithm~\ref{alg:main_QRS} to converge. 

\begin{figure}[!t]\centering
    \begin{subfigure}
        [Sum spectral efficiency]{\resizebox{0.84\columnwidth}{!}{\includegraphics{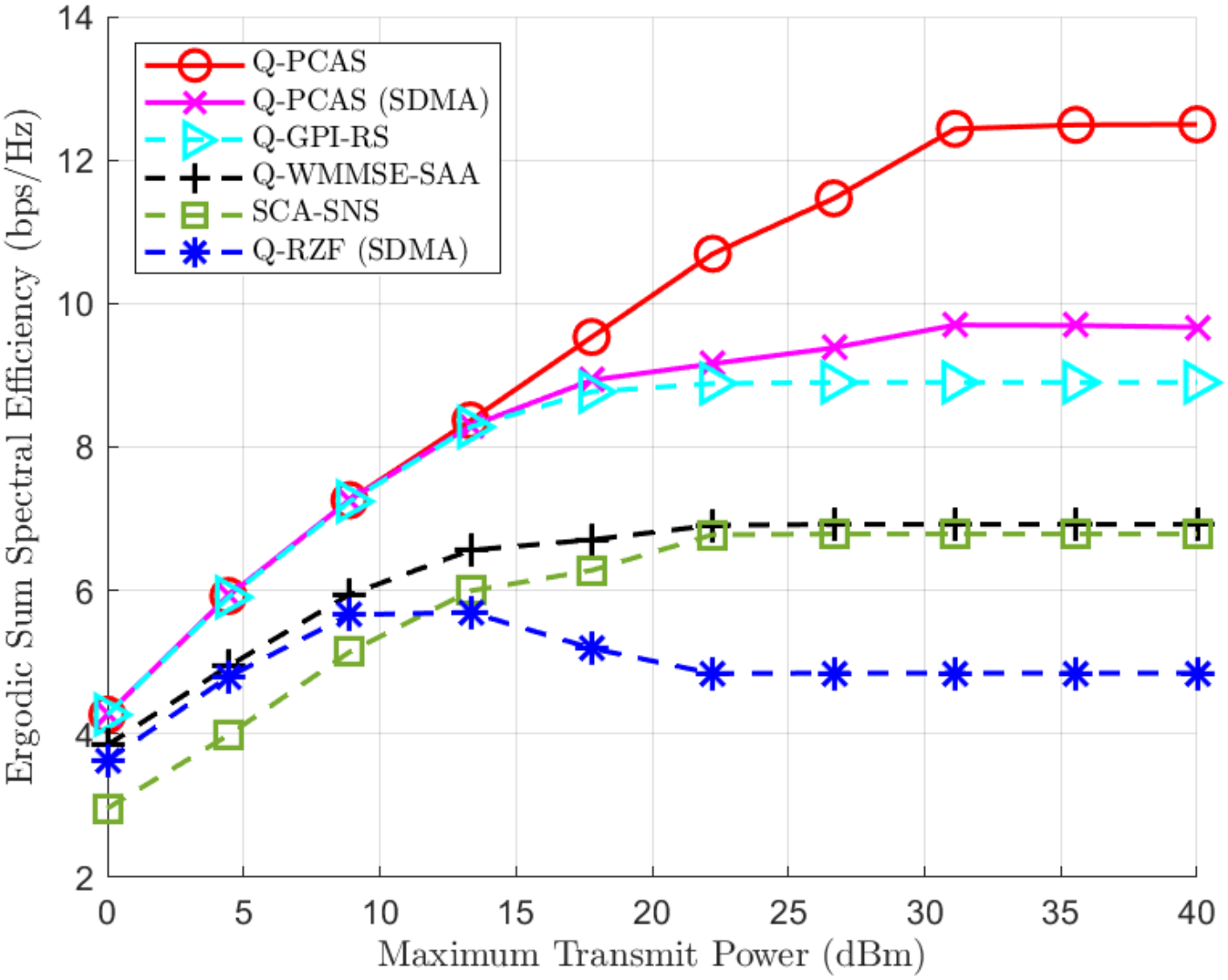}}}
    \end{subfigure}
    \begin{subfigure}
        [Selection ratio]{\resizebox{0.84\columnwidth}{!}{\includegraphics{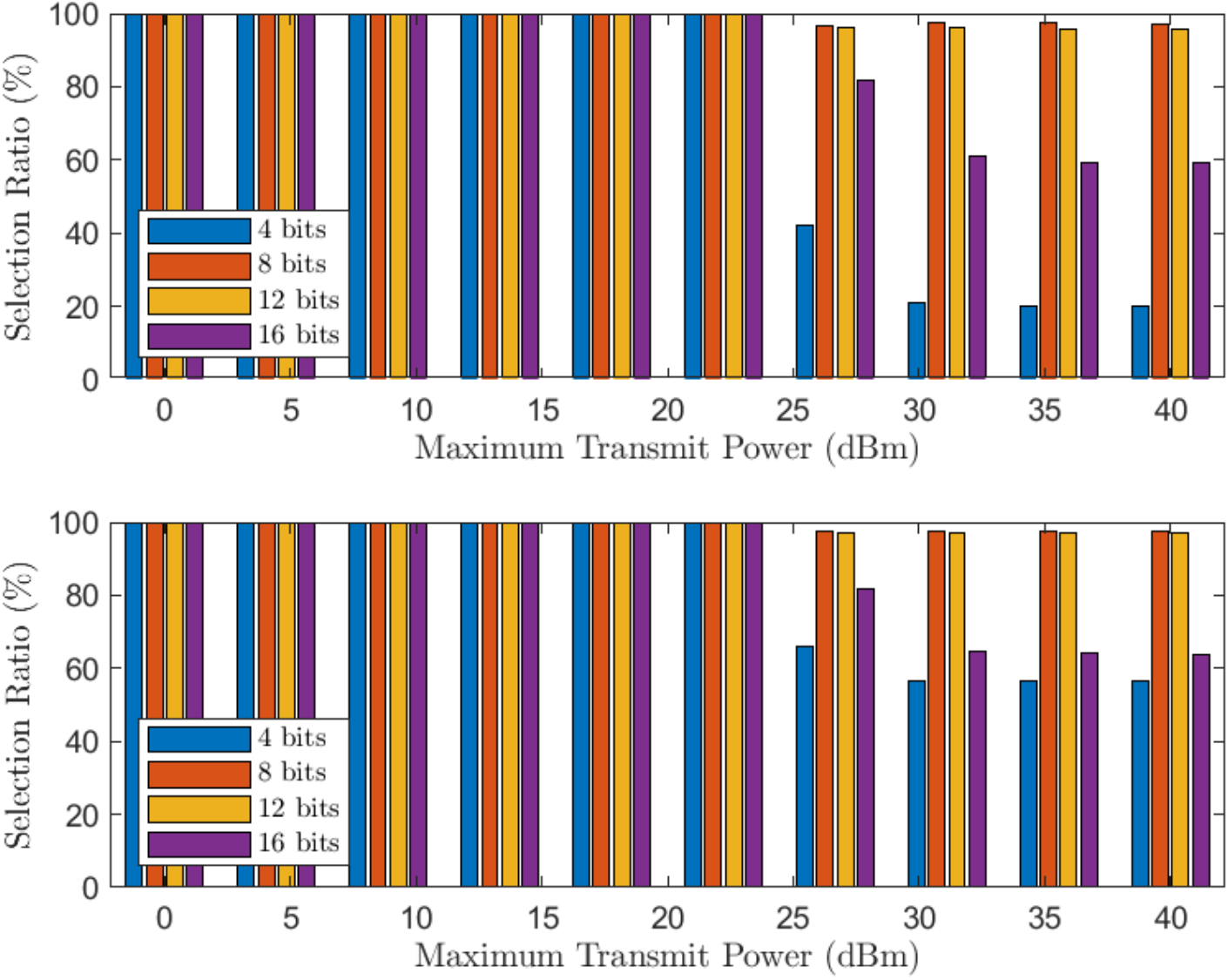}}}
    \end{subfigure}
    \begin{subfigure}
        [SE of the common and private streams]{\resizebox{0.84\columnwidth}{!}{\includegraphics{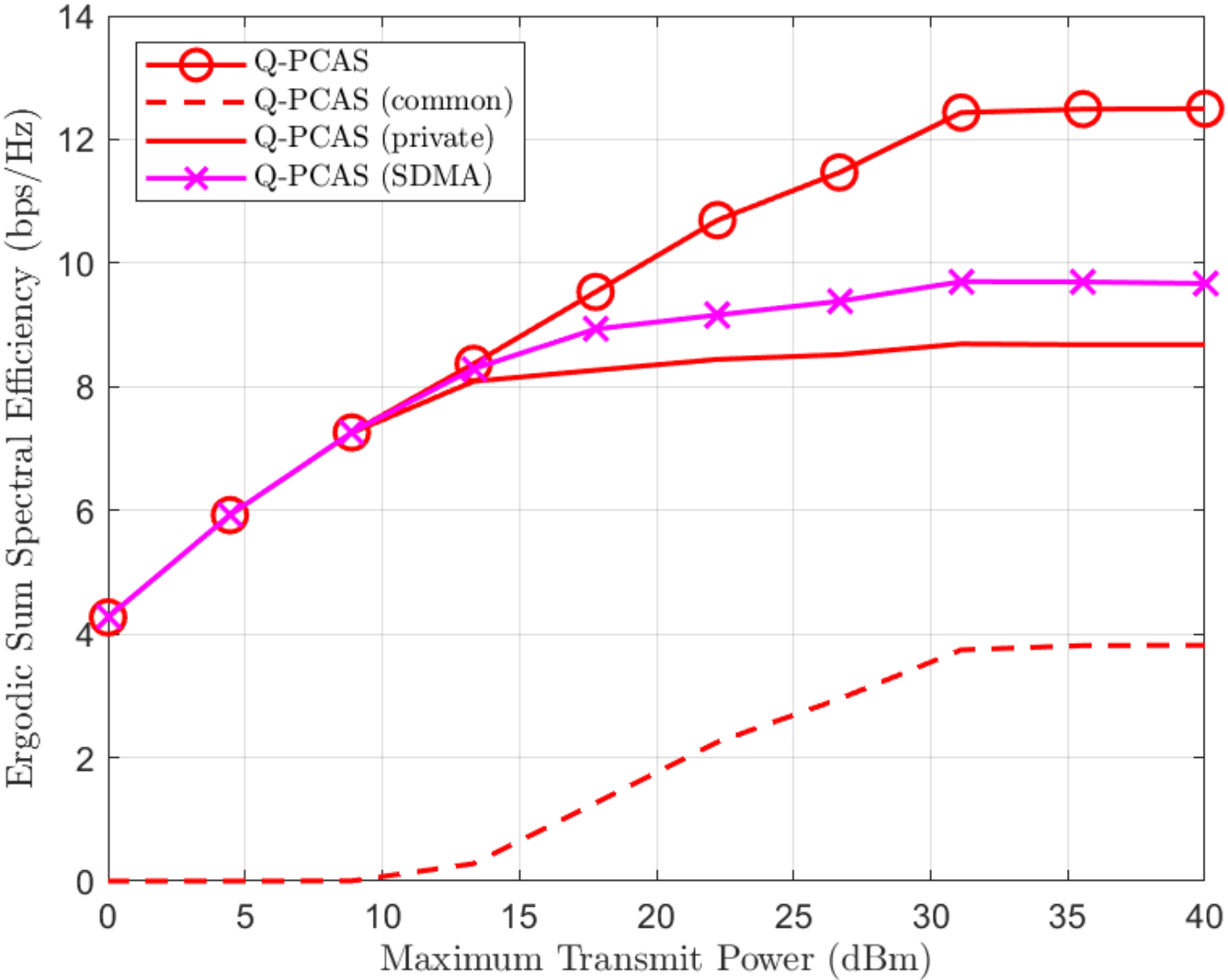}}}
    \end{subfigure}
\caption{
    (a) The sum SE with respect to the maximum transmit power $P$,
    (b) the selection ratio of each DAC resolution $b_{{\sf DAC}, i} \in \{4, 8, 12, 16\}$, 
    and (c) the SE of the common and private streams, all with respect to the maximum transmit power $P$ for $N = 16$ BS antennas, $K = 4$ users, and $P_{\sf tot} = 40$ dBm. The top and bottom bar graphs in (b) correspond to our proposed RSMA and SDMA algorithms, respectively.}
\label{fig:R_wrt_P}
\vspace{-1em}
\end{figure}

The baseline algorithms used for comparison are:
\begin{itemize}
    \item {\bf Q-GPI-RS} (RSMA): The initial work on quantized RSMA precoding under perfect CSIT using the GPI method \cite{park2022rate}.
    This algorithm only considers the maximum transmit power and does not consider the circuit power.
    The complexity order is $\mathcal{O} (T_{\sf GPI} N^3K)$.
    
    \item {\bf Q-WMMSE-SAA} (RSMA): The quantization-aware version of the weighted minimum mean square error (WMMSE) algorithm \cite{joudeh2016sum}.
    To handle imperfect CSIT, this algorithm employs the sample average approximation (SAA) approach, which obtains the empirical average over $M=1000$ channel realizations.
    The complexity order is $\mathcal{O} (T_{\sf WMMSE}N^{3.5}K^{3.5})$, where $T_{\sf WMMSE}$ is the number of iterations of the WMMSE algorithm.
    
    \item {\bf SCA-SNS} (RSMA): The successive null-space precoding (SNS) algorithm \cite{krishnamoorthy2022downlink}, which has been shown to be robust to imperfect CSIT.
    This algorithm uses linear combinations of null-space basis vectors of successively augmented channel matrices to create precoding vectors.
    Successive convex approximation (SCA) is used to tackle the non-convexity of the problem.
    The complexity order is $\mathcal{O} (T_{\sf SCA}(N^2 + K^3 + NK(N-K) )^{1.5}\log(1/\epsilon_{\sf SCA}))$, where $T_{\sf SCA}$ and $\epsilon_{\sf SCA}$ are the number of iterations and the numerical tolerance of the SCA algorithm, respectively.
    
    \item {\bf{Q-RZF}} (SDMA): The conventional quantization-aware linear regularized zero-forcing (RZF) precoder, which is constructed using the effective channel ${\hat{\bf h}}^{\rm eff}_{k} = {\bf{\Phi}}_{{\alpha }_{\sf DAC}}^{\sf H}{\hat{\bf h}_{k}}$.
    This baseline uses the space-division multiple access (SDMA) framework, a special case of RSMA when there is no message splitting.
\end{itemize}
For our baselines, we incorporate a plain power scaling approach which simply reduces the transmit power within the power budget in power-constrained regimes.

\subsection{Heterogeneous-Resolution DACs}

We first consider the case where the BS uses heterogeneous resolution DACs.
The proposed RSMA algorithm, Q-PCAS, is compared with the baselines including the SDMA equivalent of our algorithm, Q-PCAS (SDMA), in terms of the SE, where $N = 16$, $K = 4$, and $P_{\sf tot} = 40$ dBm.
We evaluate the SE performance with respect to the maximum transmit power $P$ in Fig.~\ref{fig:R_wrt_P}(a). 
As $P$ increases for fixed $P_{\sf tot} = 40$ dBm, the BS can have more flexibility in using higher transmit power with less antennas or more antennas with lower transmit power.
Accordingly, as shown in Fig.~\ref{fig:R_wrt_P}(a), the proposed Q-PCAS algorithm outperforms the baselines, capturing the best performance trade-off through joint optimization.
In addition, when comparing Q-PCAS with Q-PCAS (SDMA), the RSMA gain is noticeable thanks to its IUI mitigation ability and flexibility in its spatial degree-of-freedom.
Meanwhile, the other baselines show no SE gains after a certain value of $P$.
This is because in power-constrained regimes, the transmit power of the baselines, $P_{\sf tx} = \varsigma \left( P_{\sf tot} - P_{\sf cir} \right)$, is constant.
In contrast, Q-PCAS appropriately selects the best antennas and the transmit power for any given total power budget at the BS.

In Fig.~\ref{fig:R_wrt_P}(b), the antenna selection ratio is plotted against the maximum transmit power $P$ for the proposed RSMA (top) and SDMA (bottom) algorithms under the same system settings as Fig.~\ref{fig:R_wrt_P}.
In the power-constrained regime ($P>25$ dBm), the antennas with the highest and lowest resolution DACs are turned off first for both the RSMA and SDMA algorithms: the $16$-bit DACs are turned off with high priority because they are power inefficient,
and the $4$-bit DACs are also turned off with high priority because they contribute marginally to the SE due to the quantization error, which increases with the transmit power $P_{\sf tx}$.
In addition, when comparing the proposed SDMA algorithm to the RSMA one, less antennas are turned off for any given $P$ in the power-constrained regime.
We conjecture that this is because for SDMA, there is no common stream as in the case of RSMA, and is thus more reluctant to turn off antennas and sacrifice degree of freedom.
It accordingly chooses to have a higher channel rank by using more transmit antennas and allocate less power to the transmit power $P_{\sf tx}$.
We remark in this simulation, the transmit power of our proposed RSMA algorithm is $16.4\%$ higher than that of our SDMA algorithm in the power-constrained regime.
Meanwhile, a large portion of the SE gain achieved by the proposed RSMA algorithm is attributed to the use of the common stream as shown in Fig.~\ref{fig:R_wrt_P}(c).
This allows more room for flexibility in antenna selection when power is limited.

\subsection{Homogeneous-Resolution DACs}

\begin{figure}[!t]\centering
	\subfigure{\resizebox{0.84\columnwidth}{!}{\includegraphics{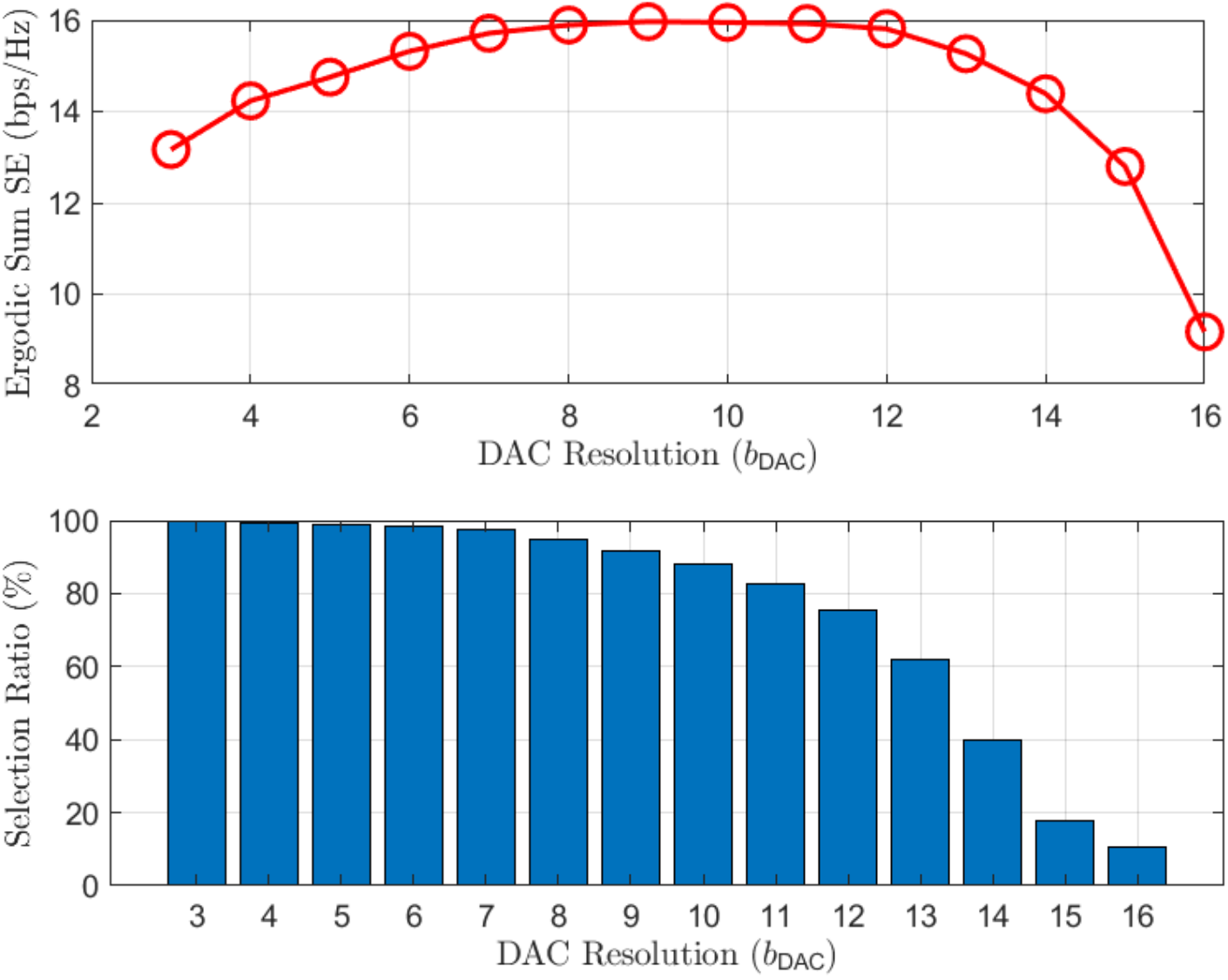}}}
\caption{
    SE with respect to the quantization resolution $b_{\sf DAC}$ (top) and the selection ratio of each resolution (bottom) for $N = 32$ BS antennas, $K = 4$ users, and $P = P_{\sf tot} = 40$ dBm.
    } 
 \label{fig:R_wrt_bDAC}
 \vspace{-1em}
\end{figure}

We now evaluate the proposed algorithm for homogeneous-resolution DAC cases.
To have a better understanding on which resolution DACs are more power-efficient than others, we plot the sum SE against the DAC resolution $b_{{\sf DAC}}$ and the corresponding antenna selection ratios in Fig.~\ref{fig:R_wrt_bDAC} for $N = 32$, $K = 4$, $P = P_{\sf tot} = 40$ dBm, and $\delta_{\sf bm} = 8$.
As shown in Fig.~\ref{fig:R_wrt_bDAC}, when both antenna selection and power scaling are taken into consideration, the most power-efficient DACs are medium-resolution DACs that use $8\sim11$ bits.
This result demonstrates that under the joint optimization of antenna selection and power control for a given BS power budget $P_{\rm tot}$, the conventional belief that low-resolution quantization bits such as $3\sim5$ bits are the most power-efficient \cite{ding2019spectral, atzeni2022lowres} is no longer valid.
It is also observed that more antennas are selected to be turned off as resolution improves because the power consumption burden on the DACs increases exponentially, which aligns with intuition.
In addition, Fig.~\ref{fig:homo_R_wrt_P} depicts the sum SE using $8$-bit resolution DACs with respect to the maximum transmit power for
$N = 16$, $K = 4$, and $P_{\sf tot} = 32$ dBm.
Similar to Fig.~\ref{fig:R_wrt_P}, the proposed algorithm achieves the highest SE.

\begin{figure}[!t]\centering
	\subfigure{\resizebox{0.84\columnwidth}{!}{\includegraphics{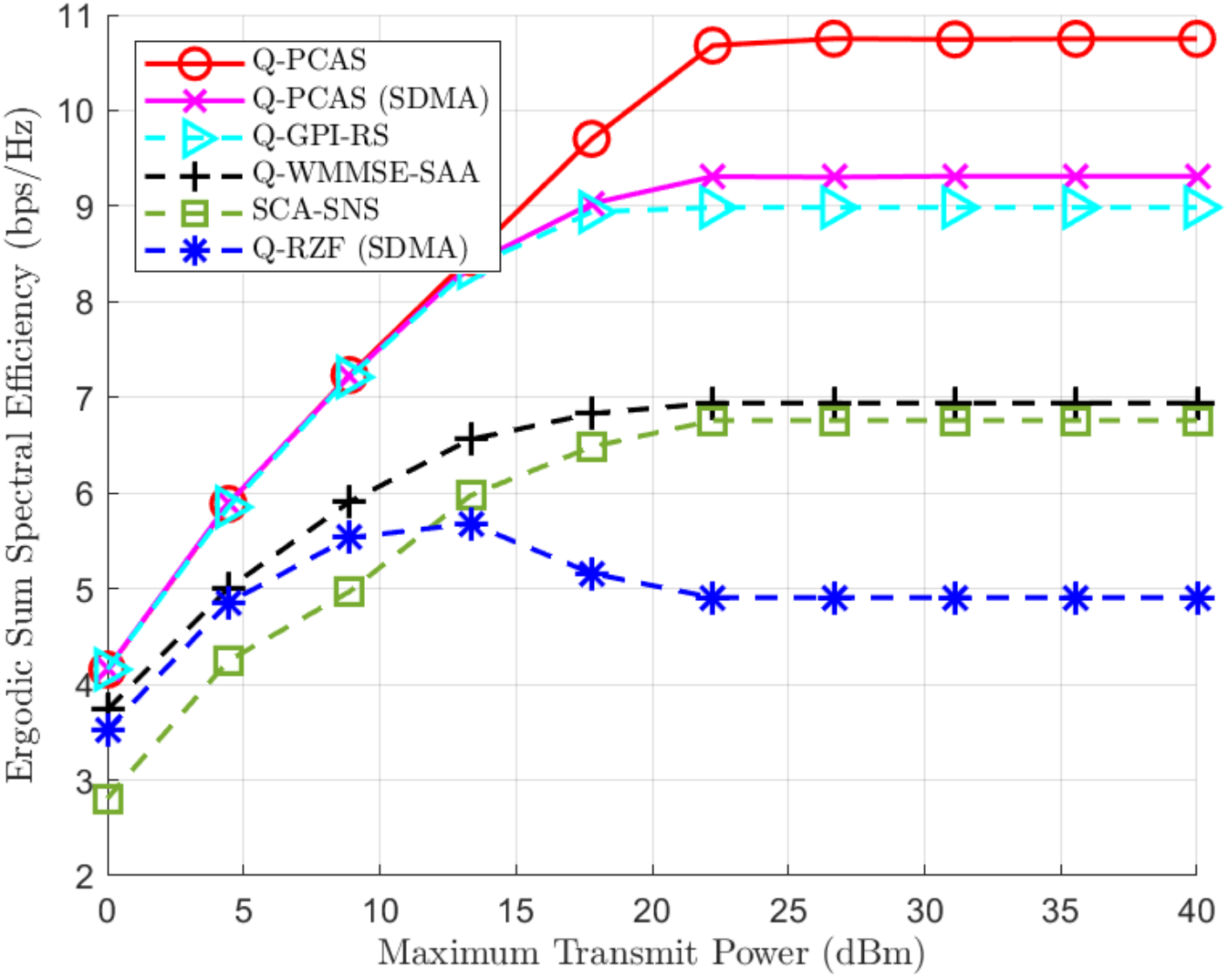}}}
\caption{
    The sum SE using homogeneous resolution DACs $(b_{\sf DAC} = 8)$ with respect to the maximum transmit power $P$ for $N = 16$ BS antennas, $K = 4$ users, and $P_{\sf tot} = 32$ dBm.} 
 \label{fig:homo_R_wrt_P}
 \vspace{-1em}
\end{figure}


\subsection{Convergence}

In Fig.~\ref{fig:convergence}, the convergence of the proposed algorithm is simulated for $N = 16$, $K = 4$, and $P = P_{\sf tot} = 40$ dBm.
The top graph in Fig.~\ref{fig:convergence} shows the normalized error $\left\| {\bf F}^{(t_{\rm {\bf F}}+1)} \! - \! {\bf F}^{(t_{\rm {\bf F}})} \right\|_F \! \mathbin{/} \! \left\| {\bf F}^{(t_{\rm {\bf F}})} \right\|_F$ with respect to each iteration $t_{\rm {\bf F}}$ of the precoder ${\bf F}$ in Algorithm~\ref{alg:main_EEM}.
It is shown that ${\bf F}$ converges within two iterations in most cases.
The middle and bottom graphs in Fig.~\ref{fig:convergence} show the convergence results in terms of the sum SE and its corresponding value of $\mu$, respectively, with respect to each iteration $t_{\rm \mu}$ of $\mu$ in Algorithm~\ref{alg:main_EEM}.
As illustrated,
the sum SE performance and its corresponding value of $\mu$ converges within a few iterations.

\begin{figure}[!t]\centering
	\subfigure{\resizebox{0.84\columnwidth}{!}{\includegraphics{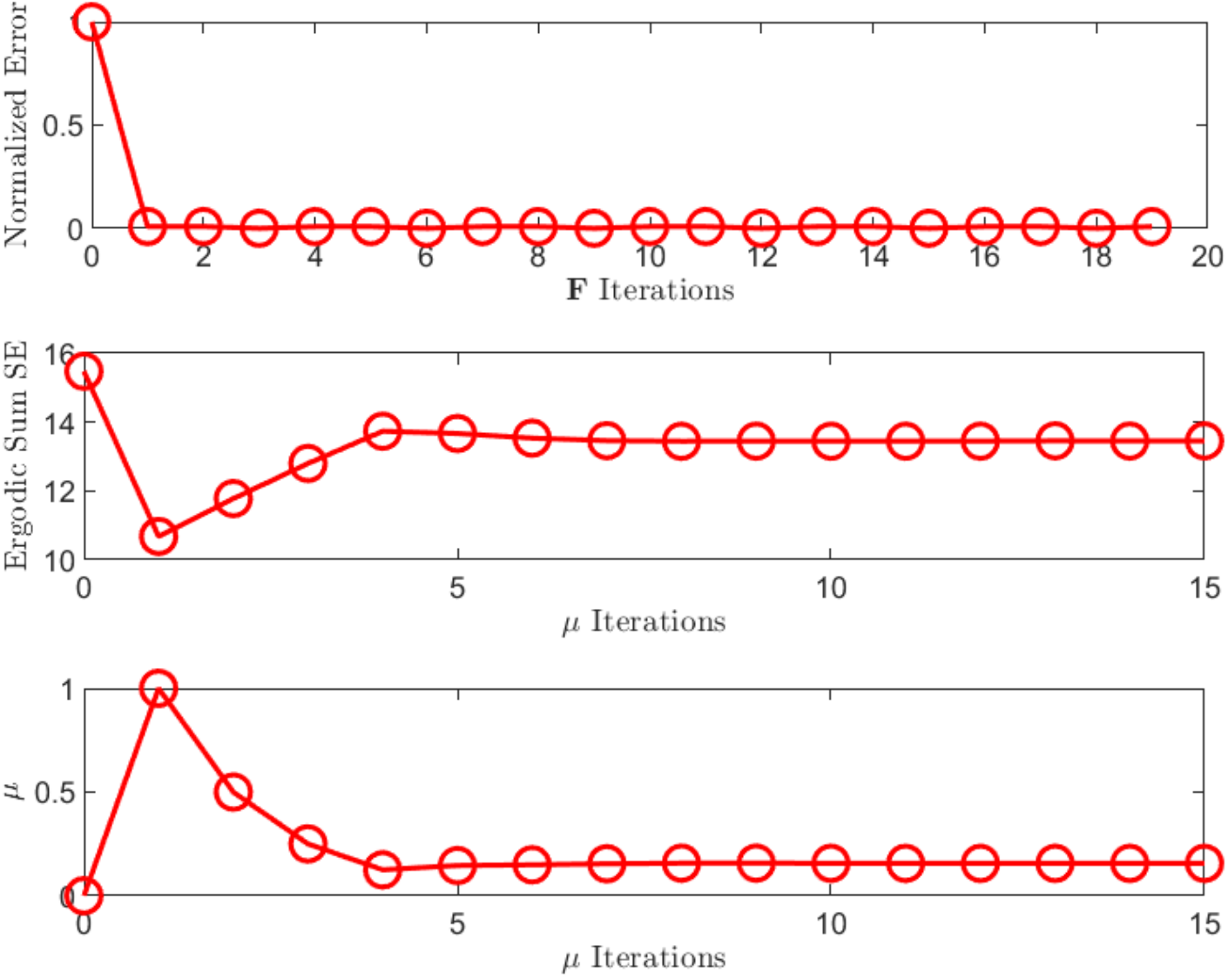}}}
\caption{
    The normalized error $\| {\bf F}^{(t_{\rm {\bf F}}+1)} \! - \! {\bf F}^{(t_{\rm {\bf F}})} \|_F \! \mathbin{/} \! \| {\bf F}^{(t_{\rm {\bf F}})} \|_F$ for each iteration $t_{\rm {\bf F}}$ of ${\bf F}$ (top), and the sum SE (middle) and its corresponding $\mu$ value (bottom) for each iteration $t_{\rm \mu}$ of $\mu$ in Algorithm~\ref{alg:main_EEM}.
    The system parameters are $N = 16$ BS antennas, $K = 4$ users, and $P = P_{\sf tot} = 40$ dBm.} 
 \label{fig:convergence}
 \vspace{-1em}
\end{figure}

\subsection{Channel Estimation Error}

\begin{figure}[!t]\centering
	\subfigure{\resizebox{0.84\columnwidth}{!}{\includegraphics{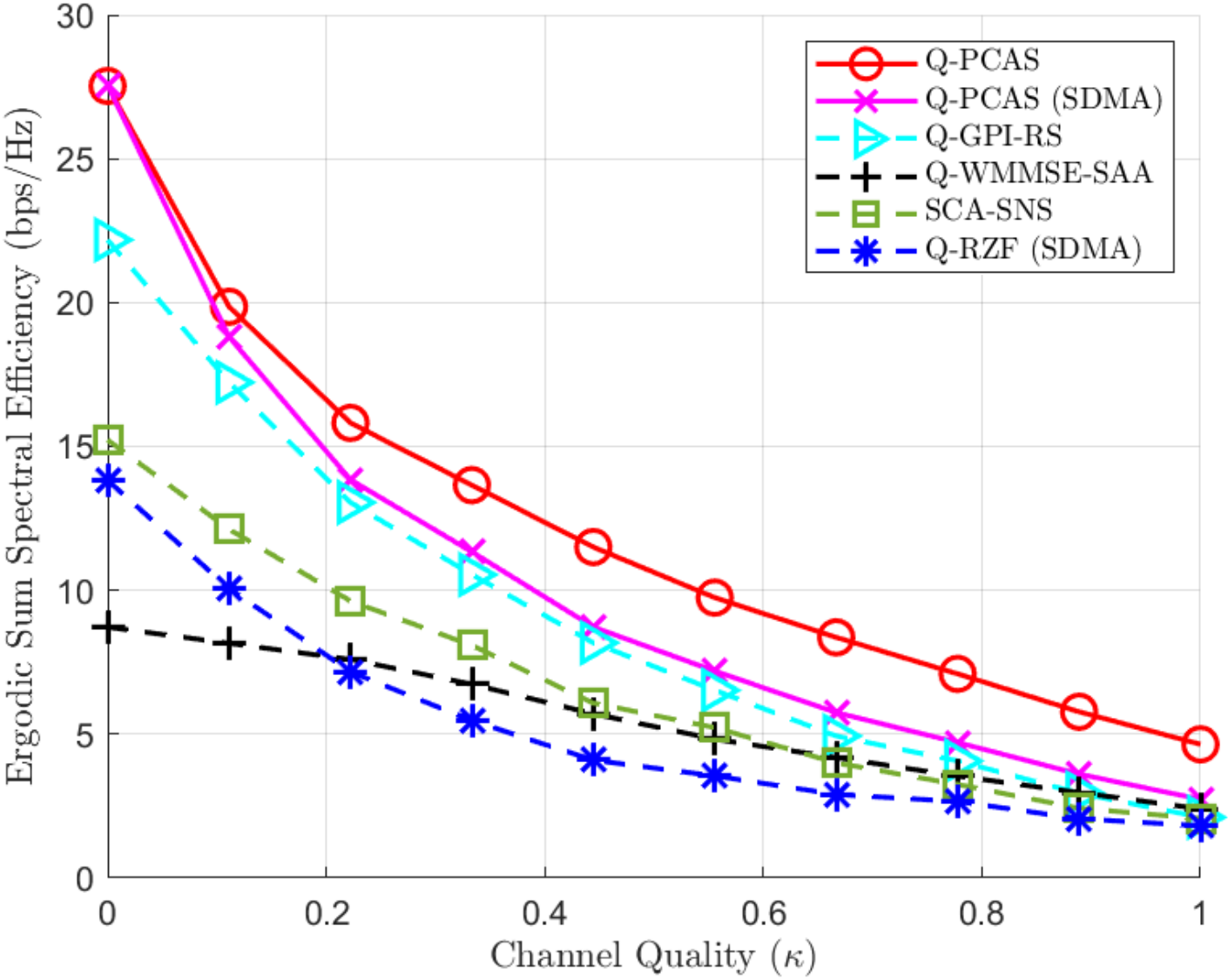}}}
\caption{
    The sum SE with respect to $\kappa \in [0, 1]$, a parameter that indicates the channel quality, for $N = 16$ BS antennas, $K = 4$ users, $P = 30$ dBm, and $P_{\sf tot} = 40$ dBm.
    } 
 \label{fig:R_wrt_kappa}
 \vspace{-1em}
\end{figure}

To evaluate the proposed algorithm with respect to the channel estimation accuracy, the sum SE is plotted against the channel quality parameter $\kappa$ for $N = 16$, $K = 4$, $P = 30$ dBm, and $P_{\sf tot} = 40$ dBm in Fig.~\ref{fig:R_wrt_kappa}.
Here, $\kappa = 0$ is when there is no channel estimation error, whereas $\kappa = 1$ is when the channel estimation error is the worst.
The proposed algorithm has the best performance out of all the baselines for imperfect CSIT, i.e., when $\kappa \neq 0$.
The SE gain for $\kappa = 1$ is attributed to ${\bf q}_k$, the CSIT error of user $k$,  having some residue information of the channel covariance matrix in the channel error covariance matrix \cite{choi2019GPI} as Q-PCAS exploits the  error covariance matrix in \eqref{eq:channelcov_R1}.

\subsection{Complexity-Reduction Approach}

\begin{figure}[!t]\centering
	\subfigure{\resizebox{0.84\columnwidth}{!}{\includegraphics{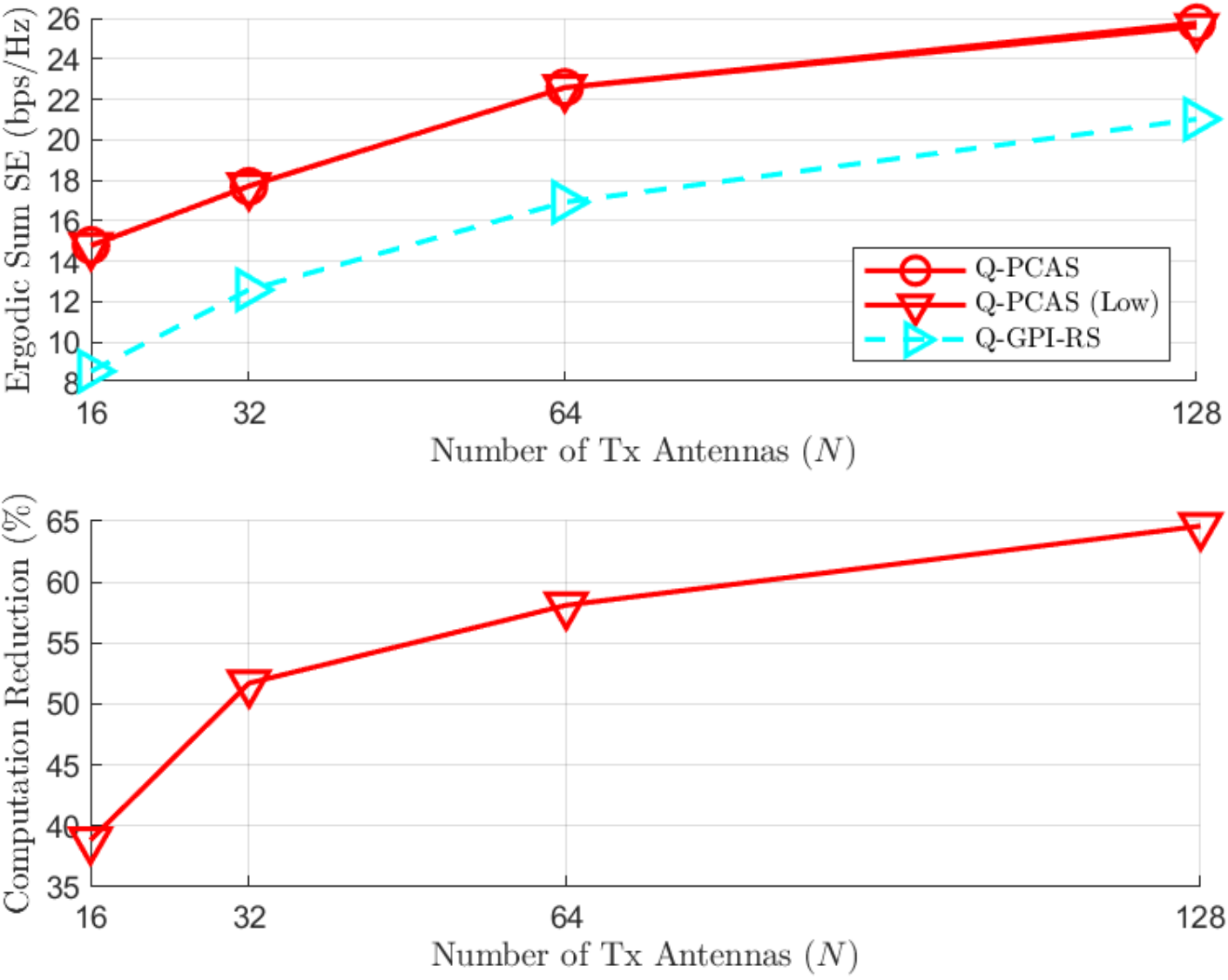}}}
\caption{
    The sum spectral efficiency (top) and the computation time reduction (bottom) of our low-complexity algorithm (in $\%$) with respect to the number of BS antennas $N$ for $K=4$ users and $P = P_{\sf tot} = 50$ dBm.
    } 
\label{fig:fast}
 \vspace{-1em}
\end{figure}

In the top subplot of Fig.~\ref{fig:fast}, we plot the sum spectral efficiency of our low-complexity algorithm with respect to $N$ for $K=4$ users, $P = P_{\sf tot} = 50$ dBm, $t_{\mu,\rm max} = 8$, and $T_{\bf F} = 10$.
Our low-complexity algorithm, denoted as Q-PCAS (Low), has negligible performance loss compared to our original proposed algorithm.
The computation time, however, is drastically reduced.
The bottom subplot of Fig.~\ref{fig:fast} plots the average computation time reduction $(1 - T_{\sf Q-PCAS (Low)}/T_{\sf Q-PCAS}) \times 100 \%$ with respect to $N$, where $T_{\sf Q-PCAS (Low)}$ and $T_{\sf Q-PCAS}$ are the average computation times of Q-PCAS (Low) and Q-PCAS, respectively.
From Fig.~\ref{fig:fast}, we can conclude that our low-complexity algorithm successfully reduces the complexity of our proposed algorithm while still achieving superior SE performance. 
Since $t_{\mu,\rm max}T_{\bf F} <N$ is typically true in massive MIMO systems, Q-PCAS (Low) has the lowest complexity order among the considered state-of-the art algorithms.

\section{Conclusion}

In this paper, we proposed an RSMA algorithm that jointly optimizes the precoder, antenna selection, and the transmit power such that sum SE performance is maximized for a total power budget at a BS.
With a series of problem reformulations, we incorporated the partial channel knowledge in the problem, expressed the problem as a GPI-friendly form, and used approximation techniques for tractability.
We also proposed a complexity-reduction approach that is more suitable for massive MIMO systems.
Numerical results show that the proposed algorithm outperforms all the baselines and that contrary to the conventional notion that low-resolution DACs with $3\sim 5$ quantization bits are the most power-efficient, medium-resolution DACs with $8\sim 11$ bits may actually be more power-efficient when utilizing the full potential of the available power at the BS.

\bibliographystyle{IEEEtran}
\bibliography{draft_final}

\end{document}